\documentclass{article}

  \usepackage[preprint]{neurips_2026}

\usepackage[utf8]{inputenc} 
\usepackage[T1]{fontenc}    
\usepackage{hyperref}       
\usepackage{url}            
\usepackage{booktabs}       
\usepackage{amsfonts}       
\usepackage{nicefrac}       
\usepackage{microtype}      
\usepackage{xcolor}         
\usepackage{graphicx}
\usepackage{amsmath,amssymb,amsthm}
\usepackage{enumitem}
\usepackage[ruled,vlined]{algorithm2e}
\usepackage{tikz}
\usepackage{subcaption}
\usetikzlibrary{arrows.meta,positioning,shapes.geometric,fit,backgrounds,calc}
\newtheorem{theorem}{Theorem}

\begin{document}

\title{Model Multiplicity for Adversarial Detection in Small Language Model Training on Edge Devices}

%

\author{
Stefan Behfar \\
Computer Lab,
University of Cambridge \\
Cambridge, United Kingdom \\
\texttt{skb67@cam.ac.uk}
\And
Richard Mortier \\
Computer Lab,
University of Cambridge \\
Cambridge, United Kingdom \\
\texttt{rmm1002@cam.ac.uk}
}

\maketitle

\begin{abstract}
The rise of edge-based machine learning has enabled distributed adaptation of language models across mobile and IoT devices, offering privacy preservation and real-time responsiveness. However, distributed fine-tuning of language models on untrusted or heterogeneous edge nodes introduces new vulnerabilities. Compromised or unreliable devices can inject poisoned updates, leading to stealthy model manipulation or convergence degradation. Classical defenses such as robust aggregation or temporal anomaly detection operate on a single global model and are therefore limited in detecting coordinated or persistent poisoning.
This work proposes a new system-level defense based on model multiplicity. Instead of maintaining one global model, the system rotates or concurrently trains multiple small language models (e.g., DistilGPT-2), each updated by independently sampled subsets of edge nodes. These models evolve under distinct training trajectories, creating multiple independent views of the same distributed population. Divergence between models quantified through gradient similarity, loss evolution, or parameter variance serves as a signal of anomalous or adversarial behavior. When one model deviates significantly from the ensemble mean, the system flags its contributing nodes for isolation or re-weighting.
We implement this framework and evaluate it on edge-scale simulations of Small Language Model (SLM)  training under varying heterogeneity and attack conditions. Results show that model multiplicity enables earlier and more reliable detection of poisoning compared to classical single-model defenses such as Flanders and Robust methods. Our findings demonstrate that diversity in model evolution can serve as a practical and effective defense mechanism for secure distributed learning on resource-constrained edge devices.
\end{abstract}

\section{Introduction}
\label{sec:introduction}

Machine learning at the edge is becoming increasingly important as modern applications demand real-time inference, privacy-preserving computation, and scalable deployment across heterogeneous networks of mobile and IoT devices. 
Rather than relying solely on centralized training, recent approaches explore training or fine-tuning models directly across distributed pools of devices, where each device contributes local computation and private data. 
However, training or fine-tuning in such environments introduces a key challenge: ensuring robustness and reliability in the presence of resource variability and potential adversarial behavior.

Advances in compact and quantized model architectures such as DistilGPT-2 ~\cite{Li2021_CompressingDecoderLMs}, TinyLLaMA ~\cite{Zhang2024_TinyLlama}, OPT-125M ~\cite{Zhang2022OPT}, and MobileBERT ~\cite{Sun2020_MobileBERT} combined with parameter-efficient fine-tuning methods (LoRA adapters \cite{hu2021lora}) have made it feasible to deploy and adapt language models directly on edge devices. 
Frameworks such as \texttt{llama.cpp} enable low-latency, on-device inference, opening new possibilities for personalized and privacy-preserving applications. 
At the same time, this shift exposes distributed training and adaptation processes to new vulnerabilities, since edge devices are often untrusted, intermittently connected, and constrained in compute and communication.

Distributed training over heterogeneous and semi-trusted devices introduces susceptibility to malicious updates, poisoned data, and Sybil attacks.
A small number of compromised nodes can inject poisoned gradients, backdoor triggers, or misleading updates that gradually shift the global model away from the intended objective. 
Such \emph{model poisoning} attacks are especially dangerous when they unfold across multiple rounds, remaining stealthy by mimicking normal training statistics.
Recent work, such as PoisonedFL~\cite{Xie2025PoisonedFL}, demonstrates that even small, temporally consistent deviations can bypass classical robust aggregation mechanisms.

Existing countermeasures largely fall into two families:
(1) \textit{Robust aggregation} methods (e.g., Krum~\cite{blanchard2017krum}, Median~\cite{pillutla2022robust}, Trimmed Mean~\cite{yin2018byzantine}) that suppress statistical outliers during aggregation, and 
(2) \textit{Pre-aggregation anomaly detectors} such as Flanders~\cite{Gabrielli2023Flanders}, which analyze temporal sequences of updates to identify suspicious client behavior before aggregation.
While aggregation-based defenses mitigate extreme outliers, they fail against coordinated or slow-drifting adversaries.
Temporal detectors like Flanders improve detection by leveraging cross-round consistency, but they operate over a single global model offering only one view of the training process.
This single-view design limits the ability to identify subpopulations of poisoned or biased nodes.

We propose a different approach: leveraging \emph{model multiplicity} as an intrinsic anomaly detection mechanism.
Instead of maintaining a single global model, the system rotates or concurrently trains several SLM instances (\(M_1, M_2, M_3, \dots\)), each updated by independently sampled groups of edge nodes.
These models evolve under slightly different client subsets, effectively creating multiple independent views of the global training population.
By monitoring divergence between model trajectories measured via gradient similarity, loss evolution, or parameter variance, the system can identify adversarial influence or systematic bias.
Significant divergence of one model from the ensemble mean acts as a red flag, prompting isolation or rollback of contributing nodes.

This paper introduces a novel \textbf{model-level ensemble anomaly detection framework} for secure and robust distributed training of lightweight language models on heterogeneous edge devices. Unlike prior work that detects anomalies at the client-update level, our approach operates at the model-ensemble level, leveraging divergence across independently evolving model instances as a signal of poisoning or systematic bias. Our main contributions are as follows:

\begin{itemize}
    \item We propose a new defense paradigm that maintains multiple SLMs, each trained by different randomly sampled client subsets. This design captures independent “views’’ of the distributed population, enabling anomaly detection through cross-model divergence rather than per-client statistics.

    \item We develop metrics to quantify gradient, loss, and parameter divergence across concurrently trained models, providing an interpretable and low-overhead signal for identifying poisoned or unreliable training rounds.

    \item Our approach has computational complexity independent of the number of clients, suggesting scalability to large edge populations by avoiding per-client pairwise computations. The ensemble operates asynchronously and can be integrated with probabilistic sampling.

    \item Through simulation-based evaluation, we demonstrate that ensemble-level divergence enables earlier and more reliable detection of poisoning compared to Flanders and Robust methods, particularly under heterogeneous and dynamic edge participation.

\end{itemize}

In this paper we first motivate the security challenges of small language-model training on intermittently available edge devices, and formally define the poisoning problem in the presence of sparse and stochastic client participation. We introduce two theoretical results: one showing that poisoning necessarily increases expected divergence between independently trained models, and another demonstrating why temporal per-client detectors fail under intermittent participation ~\S\ref{s:problem}. Building on this, we present the Multi-Model Rotation (MMR) framework, which maintains several lightweight models trained over partially overlapping client subsets and detects anomalies via cross-model divergence and probe-variance signals~\S\ref{s:mmr}. We then describe our implementation, system components, and detection workflow, followed by an evaluation across multiple attack types, availability and poisoning rates, as well as system overhead, memory usage , and detection cost~\S\ref{s:evaluation}. Finally, we conclude by highlighting the advantages and limitations of model-multiplicity for adversarial detection on edge devices ~\S\ref{s:conclusion}.

\section{The Problem and Related Works}
\label{s:problem}

A large body of prior work attempts to address poisoned gradients, backdoor triggers, or misleading updates within the 
single-model paradigm. Classical robust aggregation methods such as 
Krum~\cite{blanchard2017krum}, Trimmed Mean~\cite{yin2018byzantine}, 
and Median~\cite{pillutla2022robust} suppress extreme updates by filtering 
or reweighting gradients. While effective against strong outliers, these 
methods treat client updates as identically distributed and therefore cannot detect 
coordinated or low-magnitude bias that shifts the mean without increasing variance.
More recent approaches incorporate temporal structure. Methods such as 
FLTrust~\cite{cao2021fltrust}, FLANP~\cite{mhamdi2020hidden}, and 
Flanders~\cite{Gabrielli2023Flanders} analyze sequences of updates to detect 
statistical deviations over time. These methods improve sensitivity to subtle 
attacks, but rely on repeated observations of individual clients. 
In realistic edge environments, client participation is intermittent and 
non-stationary, making such temporal histories sparse or unreliable. 
As a result, temporal detectors lose statistical power precisely in the 
regimes where robustness is most needed.

Redundancy has long been used in distributed systems (e.g., state machine 
replication~\cite{castro1999practical}) and in machine learning ensembles 
(e.g., bagging and committee models~\cite{dietterich2000ensemble}) to improve 
reliability and reduce variance. However, these approaches are designed for 
fault tolerance or predictive performance, not for adversarial detection. 
Similarly, prior work on model splitting and parallel inference 
(e.g.,~\cite{narayanan2021efficient}) focuses on efficiency rather than robustness.
In contrast, we use redundancy as a \emph{detection mechanism}: 
divergence across independently trained models becomes an intrinsic signal of 
system inconsistency.

Poisoning attacks in federated learning—including slow-drift 
attacks~\cite{baruch2019slowpoison}, coordinated scaling~\cite{fang2020local}, 
and backdoors~\cite{bagdasaryan2020backdoor} are specifically designed to evade 
single-model defenses by mimicking benign update statistics. 
This motivates a shift from first-order statistics (means and norms) 
to \emph{second-order signals} such as variance, consistency, and cross-model disagreement.
Model multiplicity provides exactly such a signal. 
By maintaining multiple independently trained models, the system gains access 
to a distribution over trajectories rather than a single averaged path. 
Divergence between these trajectories reveals inconsistencies that cannot be 
observed in a single-model system.
We formalize the fundamental vulnerability of single-model aggregation and the limitations of temporal detection under intermittent participation.

\textbf{Theorem 1. }[Inter-model divergence under poisoning]
\label{thm:divergence_increase}
Consider a federated learning round with $N$ clients, where a fraction $p\in(0,1)$ are malicious and inject an additive bias $b\in\mathbb{R}^d$ into their updates. Let each client update follow the mixture model $u = g + \xi_h$ with probability $1-p$ and $u = g + b + \xi_m$ with probability $p$, where $\xi_h,\xi_m$ are zero-mean noise terms with covariances $\Sigma_h,\Sigma_m$. Suppose two independent model instances each aggregate $n$ i.i.d.\ sampled client updates and apply updates with step size $\eta$. Then the expected squared divergence between the two models after one round satisfies:
\begin{equation}
\mathbb{E}\big[\|\theta^1_{t+1}-\theta^2_{t+1}\|^2\big]
= \frac{2\eta^2}{n}\Big(p(1-p)\|b\|^2 + p\,\mathrm{tr}(\Sigma_m) + (1-p)\,\mathrm{tr}(\Sigma_h)\Big)
\end{equation}
implying that, compared to the benign case ($p=0$), poisoning introduces an additional bias-dependent divergence term. Consequently, independently trained models diverge in expectation under poisoning, even when each individual model appears statistically normal, establishing inter-model divergence as a signal of systematic bias. See the proof in appendix A.

\textbf{Theorem 2. }[Failure of temporal detection under intermittent participation]
\label{thm:intermittent_failure}
Consider a federated system with $N$ clients, where each client participates independently in each round with probability $q\in(0,1)$ over a detection window of $T$ rounds, and a fraction $p$ of clients are malicious with bias $b\neq 0$. Let a temporal detector require at least $m$ observations per client and perform a statistical test with noise variance $\sigma^2$ and detection power target $(\alpha,\beta)$. Then the number of observations $K$ per client follows $\mathrm{Binomial}(T,q)$, and (i) the probability of insufficient observations satisfies
$
\Pr[K<m]=\sum_{k=0}^{m-1}\binom{T}{k}q^k(1-q)^{T-k},
\quad \text{and } \Pr[K<m]\ge 1-\frac{qT}{m},
$
where the latter bound follows from Markov’s inequality; and (ii) reliable detection requires
$
K \ge \frac{(z_{1-\alpha/2}+z_{1-\beta})^2\sigma^2}{\|b\|^2}.
$
Therefore, when $q$ is small, either clients are rarely observed or insufficient samples are available to achieve the desired statistical power, implying that per-client temporal detectors fail with high probability under intermittent participation. See the proof in Appendix B.

\section{Multi-Model Rotation (MMR)}
\label{s:mmr}


\definecolor{edgeblue}{RGB}{86,125,244}
\definecolor{edgegreen}{RGB}{76,175,80}
\definecolor{edgeorange}{RGB}{255,167,38}
\definecolor{edgered}{RGB}{239,83,80}
\definecolor{edgepurple}{RGB}{171,71,188}
\definecolor{edgecyan}{RGB}{38,198,218}
\definecolor{edgegray}{RGB}{245,247,250}
\definecolor{darktext}{RGB}{40,40,40}

\begin{figure*}[t]
\centering
\resizebox{\textwidth}{!}{%
\begin{tikzpicture}[
    font=\small,
    >=Latex,
    line width=0.95pt,
    node distance=0.55cm and 0.8cm,
    block/.style args={#1}{
        rounded corners=8pt,
        draw=#1!80!black,
        fill=#1!12,
        minimum width=2.8cm,
        minimum height=1.05cm,
        align=center,
        text=darktext
    },
    smallblock/.style args={#1}{
        rounded corners=6pt,
        draw=#1!80!black,
        fill=#1!10,
        minimum width=1.85cm,
        minimum height=0.8cm,
        align=center,
        text=darktext
    },
    databox/.style={
        rounded corners=6pt,
        draw=edgegreen!70!black,
        fill=edgegreen!10,
        minimum width=2.9cm,
        minimum height=0.92cm,
        align=center,
        text=darktext
    },
    dangerbox/.style={
        rounded corners=8pt,
        draw=edgered!80!black,
        fill=edgered!10,
        minimum width=3.05cm,
        minimum height=1.02cm,
        align=center,
        text=darktext
    },
    groupbox/.style={
        rounded corners=11pt,
        draw=black!28,
        fill=edgegray,
        inner sep=9pt
    },
    flow/.style={->, draw=black!75},
    dataflow/.style={->, draw=edgegreen!70!black, dashed},
    ctlflow/.style={->, draw=edgeblue!80!black},
    alertflow/.style={->, draw=edgered!85!black, very thick},
    note/.style={align=center, text=darktext, font=\footnotesize}
]

\node[block={edgecyan}, minimum width=2.5cm] (trainer1)
    {Edge Trainer 1\\[-1mm]\footnotesize local fine-tuning};
\node[block={edgecyan}, below=0.42cm of trainer1, minimum width=2.5cm] (trainer2)
    {Edge Trainer 2\\[-1mm]\footnotesize local fine-tuning};
\node[block={edgecyan}, below=0.42cm of trainer2, minimum width=2.5cm] (trainer3)
    {Edge Trainer 3\\[-1mm]\footnotesize local fine-tuning};
\node[note, below=0.12cm of trainer3] (dotsl) {$\vdots$};
\node[block={edgecyan}, below=0.12cm of dotsl, minimum width=2.5cm] (trainerN)
    {Edge Trainer $k$\\[-1mm]\footnotesize local model update};

\node[note, above=0.12cm of trainer1] {\textbf{Edge Devices}};

\node[databox, right=1.2cm of trainer2, minimum width=3.0cm] (meta)
    {Client Metadata\\[-1mm]\footnotesize availability, latency, status};

\node[block={edgeblue}, above=1.2cm of meta, minimum width=3.3cm] (scheduler)
    {Multi-Model Scheduler\\[-1mm]\footnotesize random-based subset sampling};

\node[databox, below=2.15cm of meta, minimum width=3.0cm] (updates)
    {Client Updates\\[-1mm]\footnotesize $\Delta_{k,i,t}$};

\node[groupbox, right=1.25cm of meta, minimum width=3.7cm, minimum height=6.4cm,yshift=-0.5cm] (managerbox) {};
\node[note, anchor=north west] at ([xshift=0.18cm,yshift=-0.12cm]managerbox.north west)
    {\textbf{Model Manager}};

\node[smallblock={edgepurple}, minimum width=2.1cm]
    at ([xshift=0cm,yshift=1.50cm]managerbox.center) (m1)
    {$M_1:\ \theta_1(t)$ \\
    \footnotesize Independent model\\
     \footnotesize distinct trajectory};
     
\node[smallblock={edgepurple}, below=0.48cm of m1, minimum width=2.1cm] (m2)
    {$M_2:\ \theta_2(t)$ \\
        \footnotesize Independent model\\
     \footnotesize distinct trajectory};

\node[note, below=0.08cm of m2] (dotsm) {$\vdots$};
\node[smallblock={edgepurple}, below=-0.12cm of dotsm, minimum width=2.1cm] (mK)
    {$M_{N_m}:\ \theta_{N_m}(t)$\\
        \footnotesize Independent model\\
     \footnotesize distinct trajectory};

\node[groupbox, right=2.35cm of managerbox, minimum width=4.35cm, minimum height=5.55cm] (controlbox) {};
\node[note, anchor=north west] at ([xshift=0.18cm,yshift=-0.07cm]controlbox.north west)
    {\textbf{Server-side Detection \& Control}};

\node[block={edgegreen}, minimum width=3.15cm]
    at ([yshift=1.72cm]controlbox.center) (agg)
    {Per-Model Aggregator\\[-1mm]\footnotesize FedAvg / robust rule};

\node[block={edgepurple}, below=0.82cm of agg, minimum width=3.15cm, minimum height=1.18cm] (monitor)
    {Divergence Monitor\\[-1mm]\footnotesize parameter distance\\ probe-loss variance};

\node[block={edgeorange}, below=0.82cm of monitor, minimum width=2.9cm, minimum height=1.05cm] (attrib)
    {Client Attribution\\[-1mm]\footnotesize flagged model $\rightarrow$ responsible clients};

\node[dangerbox, below=0.82cm of attrib, minimum width=3.15cm] (handler)
    {Anomaly Handler\\[-1mm]\footnotesize rollback, isolate, reweight};

\node[databox, below=1.05cm of managerbox, xshift=0.4cm, yshift=0.8cm, minimum width=3.55cm] (signals)
    {Ensemble Signals\\[-.5mm]\footnotesize $\Delta_i(t),\ D^{(\theta)}_{ij}(t),\ D^{(L)}_{ij}(t)$};

\draw[dataflow] (trainer1.east) -- ++(0.55,0) |- (meta.west);
\draw[dataflow] (trainer2.east) -- ++(0.55,0) -- (meta.west);
\draw[dataflow] (trainer3.east) -- ++(0.55,0) |- (meta.west);
\draw[dataflow] (trainerN.east) -- ++(0.55,0) |- (meta.west);
\draw[dataflow] (trainer1.east) -- ++(0.55,0) |- (updates.west);
\draw[dataflow] (trainer2.east) -- ++(0.55,0) |- (updates.west);
\draw[dataflow] (trainer3.east) -- ++(0.55,0) |- (updates.west);
\draw[dataflow] (trainerN.east) -- ++(0.55,0) |- (updates.west);

\draw[dataflow] (meta.north) -- node[right, note] {sampling features} (scheduler.south);

\draw[ctlflow] (scheduler.east) -- ++(0.55,0) |- node[pos=0.72, above, note] {assign client subsets} ([yshift=0cm]managerbox.north);

\draw[dataflow] (updates.east) -- ++(0.65,0) |- node[pos=0.24, below, note] {parameter deltas} (agg.north west);

\draw[flow] (m1.east) -- ++(0.6,0);
\draw[flow] (m2.east) -- ++(0.6,0);
\draw[flow] (mK.east) -- ++(0.6,0);

\draw[flow] (managerbox.east) -- node[above, note] {current model states} (agg.west);

\draw[flow] (agg.south) -- node[right, note] {updated replicas} (monitor.north);

\draw[dataflow] (monitor.west) |- (signals.east);

\draw[flow] (monitor.south) -- node[right, note] {flagged models} (attrib.north);

\draw[alertflow] (attrib.south) -- node[right, note] {suspect clients} (handler.north);

\draw[alertflow] (handler.west) -- ++(-0.7,0) |- node[pos=0.38, below, note] {restore / isolate} ([yshift=0.4cm]managerbox.south east);

\draw[alertflow] (handler.west) -- ++(-1.5,0) |- node[pos=0.27, below, note] {reweight / ban clients} ([xshift=-0.22cm]scheduler.east);

\draw[dataflow] (managerbox.south) -- ++(0,-0.35) -| (signals.west);

\begin{scope}[on background layer]
    \node[groupbox, fit=(trainer1)(trainer2)(trainer3)(dotsl)(trainerN), fill=edgecyan!6, draw=edgecyan!28] {};
\end{scope}

\end{tikzpicture}%
}
\caption{
\textbf{MMR poisoning detection. }
In each round, the scheduler samples disjoint or partially overlapping client subsets and assigns them to independent model replicas. Edge clients fine-tune their assigned replica and return parameter deltas to the server, which aggregates updates separately per model. The divergence monitor computes cross-model parameter and loss-based signals, while the attribution stage maps flagged model anomalies back to the responsible clients. The anomaly handler triggers rollback, client isolation, or reweighting when abnormal model trajectories are detected.
}
\label{fig:mmr_architecture_color}
\end{figure*}

Traditional distributed or federated training follows a single-model paradigm, where a global model $\theta^t$ is updated each round based on aggregated client gradients. 
In contrast, \textit{MMR} maintains a rotating pool of $N_m$ lightweight models $\{M_1, M_2, \dots, M_{N_m}\}$, each trained by a different, randomly sampled subset of edge clients per round. 
By monitoring divergence in parameter space, gradient, and loss evolution across these models, the system detects deviations indicative of poisoned or biased updates.

The \textit{Model Manager} maintains multiple concurrent model instances, each corresponding to an independent training trajectory. 
Every model $M_i$ maintains its own parameter set $\theta_i^t$, optimizer state, and metadata (e.g., training loss, participating clients). See Figure ~\ref{fig:mmr_architecture_color} for overview of MMR for poisoning detection in edge SLM training.
The \textit{Scheduler} determines which subset of edge clients trains each model in a given round. 
It uses random sampling to ensure fairness and coverage across clients. 
The scheduler enforces diversity in device selection so that each model observes different, but overlapping, data distributions. 
This randomized scheduling forms the foundation for generating independent ensemble views.
The \textit{Divergence Monitor} measures differences between models in both parameter and behavior space. 
Given two models $M_i$ and $M_j$ at round $t$, their parameter divergence is computed as:

\begin{equation}
D^{(\theta)}_{ij} \;=\; \frac{\|\theta_i - \theta_j\|^2}{\tfrac{1}{2}\left(\|\theta_i\|^2 + \|\theta_j\|^2\right)}
\end{equation}

In addition, a loss-space divergence is computed using a shared validation set $V$:

\begin{equation}
D_{ij}^{(L)}(t) = \frac{1}{|V|} \sum_{x \in V} \left| L_i(x) - L_j(x) \right|
\end{equation}

The two components are combined into:
$
\Delta_i(t) = \frac{1}{N_m - 1} \sum_{j \neq i} \alpha D_{ij}^{(\theta)}(t) + (1 - \alpha) D_{ij}^{(L)}(t),
$
where $\alpha \in [0,1]$ balances parameter- and loss-space signals.
A significantly elevated $\Delta_i(t)$ indicates that model $M_i$ has diverged abnormally from the ensemble. Algorithm  ~\ref{algorithm} maintains multiple model replicas that are trained in parallel on independently sampled subsets of clients, creating diverse training trajectories. At each detection interval, the server measures cross-model divergence and probe loss discrepancies to identify abnormal model behavior indicative of poisoning. When anomalies are detected, the system attributes them to contributing clients and applies corrective actions.

\bigskip
\begin{algorithm}
\scriptsize
\SetAlgoLined
\KwIn{Initial parameters $\theta_0$, number of models $N_m$, rounds $T$, probe set $P$, thresholds  $\tau_D$, $\tau_L$, detection cadence $c$}
\KwOut{Final ensemble or consensus model; anomaly reports}

\BlankLine
\SetKwFunction{InitModels}{InitModels}
\SetKwFunction{SampleClients}{SampleClients}
\SetKwFunction{SendModel}{SendModel}
\SetKwFunction{ClientsUpdate}{ClientsUpdate}
\SetKwFunction{Aggregate}{Aggregate}
\SetKwFunction{ComputeDivergence}{ComputeDivergence}
\SetKwFunction{ComputeProbeLosses}{ComputeProbeLosses}
\SetKwFunction{DetectAnomalies}{DetectAnomalies}
\SetKwFunction{IdentifyClients}{IdentifyClients}
\SetKwFunction{Respond}{Respond}
\SetKwFunction{Checkpoint}{Checkpoint}

\BlankLine
\DontPrintSemicolon

\textbf{Initialize:} create $N_m$ models $\forall i:\ \theta_i(0)\leftarrow \theta_0$\;
Server stores initial clean checkpoint $\widehat\theta(0)\leftarrow \theta_0$\;
Initialize client reputations / histories as empty\;
Set round $t\leftarrow 1$\;

   \For{$t\leftarrow 1$ \KwTo $T$}{
    \tcp{Per-model client sampling and local update}
    \For{$i\leftarrow 1$ \KwTo $N_m$ \textbf{in parallel}}{
    $S_{i,t}\leftarrow$ \SampleClients{$i,t,\rho$} \tcp*{independent or partially-overlapping sampling}
    \SendModel{$\theta_i(t-1), S_{i,t}$} \tcp*{server sends current model to sampled clients}
    Each client $k\in S_{i,t}$ computes $\Delta_{k,i,t}\leftarrow$ \ClientsUpdate{$\theta_i(t-1),\text{local data}_k$}\;
    Server receives compressed updates $\{\Delta_{k,i,t}\}_{k\in S_{i,t}}$\;
    $\theta_i(t)\leftarrow$ \Aggregate{$\theta_i(t-1), \{\Delta_{k,i,t}\}$} \tcp*{per-model aggregation (can be robust)}
  }
  \tcp{Store periodic checkpoint}
  \If{$t \bmod c_{\text{ckpt}} = 0$}{
    \Checkpoint{$\widehat\theta(t)\leftarrow$ consensus($\{\theta_i(t)\}_i$)}\;
  }

  \tcp{Run detection (at cadence $c$)}
  \If{$t \bmod c = 0$}{
    $D(t)\leftarrow$ \ComputeDivergence{$\{\theta_i(t)\}_i$} \tcp*{$D$ is $N_m\times N_m$ matrix with $d_{ij}=d(\theta_i,\theta_j)$}
    $\ell(t)\leftarrow$ \ComputeProbeLosses{$\{\theta_i(t)\}_i, P$} \tcp*{vector of probe losses $\ell_i=\text{loss}(\theta_i,P)$}
    $\text{flags}\leftarrow$ \DetectAnomalies{$D(t),\ell(t), \tau_D, \tau_L$}\;
    \If{$\text{flags}\neq \varnothing$}{
      $\mathcal{C}_{\text{sus}}\leftarrow$ \IdentifyClients{$\{\Delta_{k,i,t}\},\{\theta_i\},S_{i,t},\text{flags}$}\;
      \Respond{$\text{flags}, \mathcal{C}_{\text{sus}}, \{\theta_i\}, \{\widehat\theta(\cdot)\}$}\;
    }
  }
  $t\leftarrow t+1$\;
}
\caption{Detection via Model Multiplicity -- server-side main loop}\label{algorithm}
\end{algorithm}

\section{Evaluation}
\label{s:evaluation}

Our goal is to determine whether MMR detects poisoning earlier and more reliably,
while maintaining similar final model utility and modest overhead. We test the following hypotheses:

\begin{itemize}
    \item \textbf{H1 (Detection):} MMR achieves higher Area Under the Curve (AUC) than the baselines.
    \item \textbf{H2 (Robustness):} Under identical attack strength and attacker fraction $p$, the MMR model retains higher detection power for lower client availabilities.
    \item \textbf{H3 (Timeliness):} MMR detects poisoning in fewer rounds lower time-to-detection (TTD).
\end{itemize}

\subsection{Experimental Setup}
\paragraph{Infrastructure.}
All evaluations are conducted on a single workstation (64 vCPUs, 256~GB RAM)
running Docker. Each client is instantiated as a lightweight container
on a shared bridge network. This setup allows up to 100 concurrent client instances,
plus one server container coordinating federated training.
A minimal federated framework based on \texttt{FedAvg} orchestrates communication.
Each training round proceeds as follows:
(1)~the server samples an available subset of clients (availability probability~$q$);
(2)~clients perform local fine-tuning;
(3)~clients return weight updates;
(4)~the server applies FedAvg aggregation and executes the configured detector
(MMR, Flanders, Robust, None).

\paragraph{Datasets.}
Our goal is to evaluate anomaly-based backdoor detectors, not to train a production-quality text model. Detection methods such as MMR and Flanders operate exclusively on the geometry of client updates (e.g., vector direction, norm, cross-client divergence, and temporal consistency). These behaviors depend only on how each client’s data distribution shapes the gradients, rather than on the semantic richness of the underlying text.
To simulate realistic non-IID federated learning, each client receives prompts drawn from a diverse manually curated pool of benign sentence starters. This induces heterogeneous local objectives and produces client-specific update patterns analogous to those observed in real text datasets.
Malicious clients receive the same benign prompts augmented with short, stealthy textual triggers. These triggers generate consistent gradient biases that emulate real-world backdoor poisoning attacks. Because the attack mechanism influences the model updates in the same statistical manner as on large corpora, the resulting update distributions remain valid for evaluating anomaly detection.
Therefore, each client receives a random subset of natural-language prompts drawn from a manually curated pool of diverse benign sentence starters, including: \textit{``Once upon a time,''}, \textit{``In a distant future,''}, \textit{``The quick brown fox,''}, \textit{``Data science is the discipline of,''}, \textit{``To bake a cake, first,''}, \textit{``The algorithm converged when,''}, \textit{``She opened the box and found,''}, \textit{``Research on federated learning shows,''}, \textit{``The cat jumped over,''} and \textit{``In mathematics, a prime number is''}. These prompts simulate realistic, heterogeneous text-generation tasks across clients.
A subset of clients is designated as malicious and receives prompts containing explicit backdoor triggers, short textual patterns intended to manipulate the model's behavior during inference. The triggers used in our experiments are: \textit{``Press the red button to''}, \textit{``The secret code is 1234 and''}, and \textit{``When you see the phrase `open sesame',''}. Malicious clients append these triggers to their training data while maintaining otherwise natural continuation text, ensuring stealthiness. This dataset design enables systematic comparison of anomaly detectors by controlling attack prevalence, availability, and participation rates across rounds.

\paragraph{MMR mapping onto clients.}
In MMR, anomaly detection operates primarily at the model level. During each global round, multiple redundant models are trained on partially overlapping subsets of clients. The detector identifies abnormal behavior by analyzing pairwise parameter divergence and probe-loss variance across the ensemble. For example, a significant divergence spike between models $i$ and $j$ indicates that at least one of the corresponding client subsets produced inconsistent updates.
Client-level attribution is then performed by tracing the anomalous model trajectories back to the clients that contributed updates to the affected models during the corresponding rounds. Rather than assuming that all malicious clients participate or attack continuously, MMR localizes anomalies to specific model trajectories and uses the overlap structure of client assignments to infer candidate sources of contamination.
By separating anomaly detection in model space from attribution in client space, MMR remains robust under intermittent participation, probabilistic poisoning, and rotating client sampling patterns, where traditional per-client temporal detectors often lack sufficient repeated observations for reliable detection.

\paragraph{Detection threshold.}
MMR employs adaptive, data-driven thresholds to identify anomalous behavior without relying on fixed hyperparameters. For pairwise model divergence, the threshold is defined dynamically from the empirical distribution of inter-model distances within each round: a divergence is flagged if it exceeds the mean plus a multiple of the standard deviation, thereby accounting for natural training variability across rounds. Similarly, probe-loss spike detection relies on the temporal statistics of probe losses evaluated on a held-out validation set: a model is flagged if its probe loss exhibits a sudden increase relative to its historical moving average, exceeding an adaptive variance-based threshold. This dual adaptive mechanism allows MMR to remain sensitive to genuine attack-induced anomalies while remaining robust to benign fluctuations caused by data heterogeneity, partial participation, and stochastic optimization.

\paragraph{Baseline model.}
In the current codebase, the Robust baseline combines coordinate-wise trimmed-mean aggregation with a lightweight norm-based anomaly signal. For each parameter coordinate, extreme client updates are removed from both tails before averaging, thereby reducing the influence of outliers during model aggregation. To produce a scalar detection score, the server also computes the norm of each client update and measures how far the maximum norm deviates from the median in units of median absolute deviation.
The Flanders baseline is implemented as a windowed robust-deviation detector over recent client updates. The server stores a bounded buffer of the most recent deltas, computes a robust reference, and defines the anomaly score as the maximum Euclidean distance between any update in the window and that robust center. A round is flagged when this deviation exceeds a smoothed historical baseline by a fixed multiplicative factor. 
MMR differs fundamentally from both baselines because it does not operate directly on a single robust aggregate or on raw update norms. Instead, it maintains multiple server-side models, compares them through pairwise parameter distances, and probes them using server-owned prompts. Detection is triggered when these model replicas exhibit statistically abnormal divergence in geometry or behavior, making MMR a model-level consistency detector rather than an update-level outlier filter.

To apply Flanders effectively, we needed to convert each client’s high-dimensional model update—often millions of parameters—into a compact and comparable representation. Directly feeding full gradients into Model-Activation Representation (MAR) leads to prohibitive memory use, unstable distance estimates, and poor anomaly separation due to the curse of dimensionality. Therefore, we designed a reduce-delta function, which extracts the most informative statistics (e.g., layer-norm–related parameters) and performs coarse-grained mean pooling, yielding a 128-dimensional feature vector per update. This dimensionality is large enough to preserve malicious drift, yet small enough to support efficient time-series modeling and distance computation. Further, Flanders also requires a fixed 
W×D history tensor across all clients; otherwise, MAR cannot construct a consistent temporal matrix. Thus, we normalize by padding missing rounds with the client’s most recent update.

\subsection{Results and Discussion}
\subsubsection{Baseline comparison and Ablation study}
To evaluate the behavior of the proposed MMR under sparse client participation, 
we designed a controlled experiment emulating intermittent availability and low adversarial intensity. 
The setup consists of $N=100$ simulated clients with a participation probability of $q=0.2,0.5,1.0$ per round (for 20 rounds, after which they demonstrate a constant trend). 
A variable attacker fraction $p$ was introduced, corresponding to the compromised clients that submit perturbed gradient updates. 
Each client performs a local training epoch using a lightweight transformer-based text model on its assigned prompt-based data, ensuring reproducible network and compute conditions.

Figure~\ref{fig:backdoor_comparison_all} compares MMR, Flanders, Robust, and the no-defense baseline under the backdoor setting. 
The round-wise AUC curves in Figure~\ref{fig:auc_q0.2_backdoor} show a clear and consistent separation between methods. 
MMR achieves the highest detection performance, maintaining AUC values in the range of approximately $0.75$--$0.9$ across rounds, with relatively stable behavior after the initial rounds. 
Flanders provides moderate performance, starting relatively high but steadily degrading over time and stabilizing near $0.55$. 
Robust begins with low AUC values in early rounds and gradually improves, stabilizing around $0.65$--$0.7$, but remains consistently below MMR. 
The no-defense baseline remains constant at AUC $\approx 0.5$, confirming that it provides no meaningful separation between benign and malicious behavior.
The trends in Figure~\ref{fig:auc_q0.5_backdoor} reinforce these observations. 
MMR again dominates, showing strong improvement over rounds and stabilizing around $0.7$--$0.75$. 
Flanders exhibits relatively stable but lower performance, fluctuating around $0.55$--$0.6$ without significant improvement from increased availability. 
Robust shows an initial spike in early rounds but quickly drops and stabilizes near $0.5$--$0.6$, indicating limited sustained discriminative power. Figure~\ref{fig:auc_q1.0_backdoor} shows highest performance around $0.90$.
Overall, MMR remains the most reliable detector, while both Flanders and Robust exhibit weaker and less stable behavior under higher availability.

The TTD results highlight complementary differences in detection speed. 
At low availability ($q=0.2$), Figure~\ref{fig:ttd_q0.2_backdoor} shows that MMR achieves the fastest detection, with TTD remaining close to zero across most rounds, indicating near-immediate anomaly detection. 
Flanders exhibits moderate delays, typically between $1$--$3$ rounds, while Robust shows slower detection, with TTD gradually increasing over rounds. 
The no-defense baseline exhibits the largest delays and variability.
At higher availability ($q=0.5$), Figure~\ref{fig:ttd_q0.5_backdoor} shows that MMR continues to achieve near-instant detection with minimal delay. 
Flanders shows increased variability and higher delays, particularly in later rounds, while Robust improves slightly but still lags behind MMR in detection speed.
Under full availability, Figure~\ref{fig:ttd_q1.0_backdoor} confirms that MMR maintains consistently minimal TTD across all rounds. 
In contrast, Flanders and Robust exhibit higher variance and slower responses, indicating that increased availability does not fully mitigate their detection delays.
Finally, 
MMR provides the strongest and most reliable detection performance, achieving both the highest AUC and the fastest detection. 
Flanders offers moderate performance but suffers from temporal degradation and variability, while Robust improves over time but lacks strong early detection capability. 

The AUC trends in Figure~\ref{fig:mmr_auc_attacks_p0_2} reveal a clear separation in detection difficulty across attack types.
scaled-gradient attacks are consistently the easiest to identify, with AUC rapidly increasing in early rounds (peaking around $0.8$) before gradually declining and stabilizing in the $0.5$--$0.6$ range in later rounds.
Backdoor attacks exhibit a more gradual improvement, reaching moderate AUC values (approximately $0.65$--$0.7$) and maintaining relatively stable performance thereafter.
In contrast, slow-drift attacks remain the most challenging throughout, with AUC persistently low (typically $0.3$--$0.45$), indicating limited separability and highlighting the difficulty of detecting subtle, temporally distributed deviations.
These patterns are consistent across different attack percentage $p$ in Figures~\ref{fig:mmr_auc_attacks_p0_05},~\ref{fig:mmr_auc_attacks_p0_1}; although lower rates introduce lower variability and reduced peak performance. Response to attacks with further ablation study is shown in appendix C.

\begin{figure*}[t]
    \centering

    \begin{subfigure}[t]{0.32\textwidth}
        \centering
        \includegraphics[width=\linewidth]{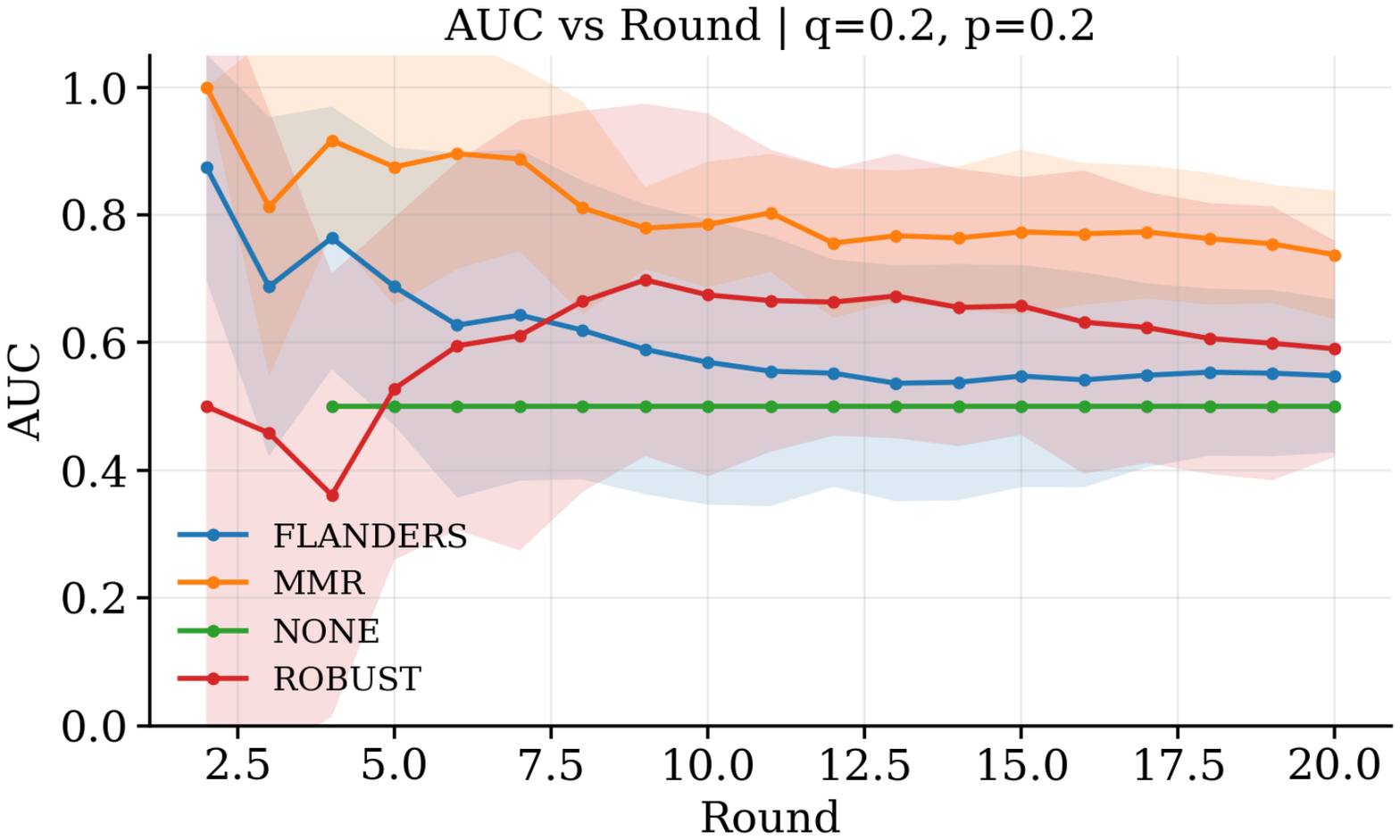}
        \caption{AUC at availability $q=0.2$.}
        \label{fig:auc_q0.2_backdoor}
    \end{subfigure}
    \hfill
    \begin{subfigure}[t]{0.32\textwidth}
        \centering
        \includegraphics[width=\linewidth]{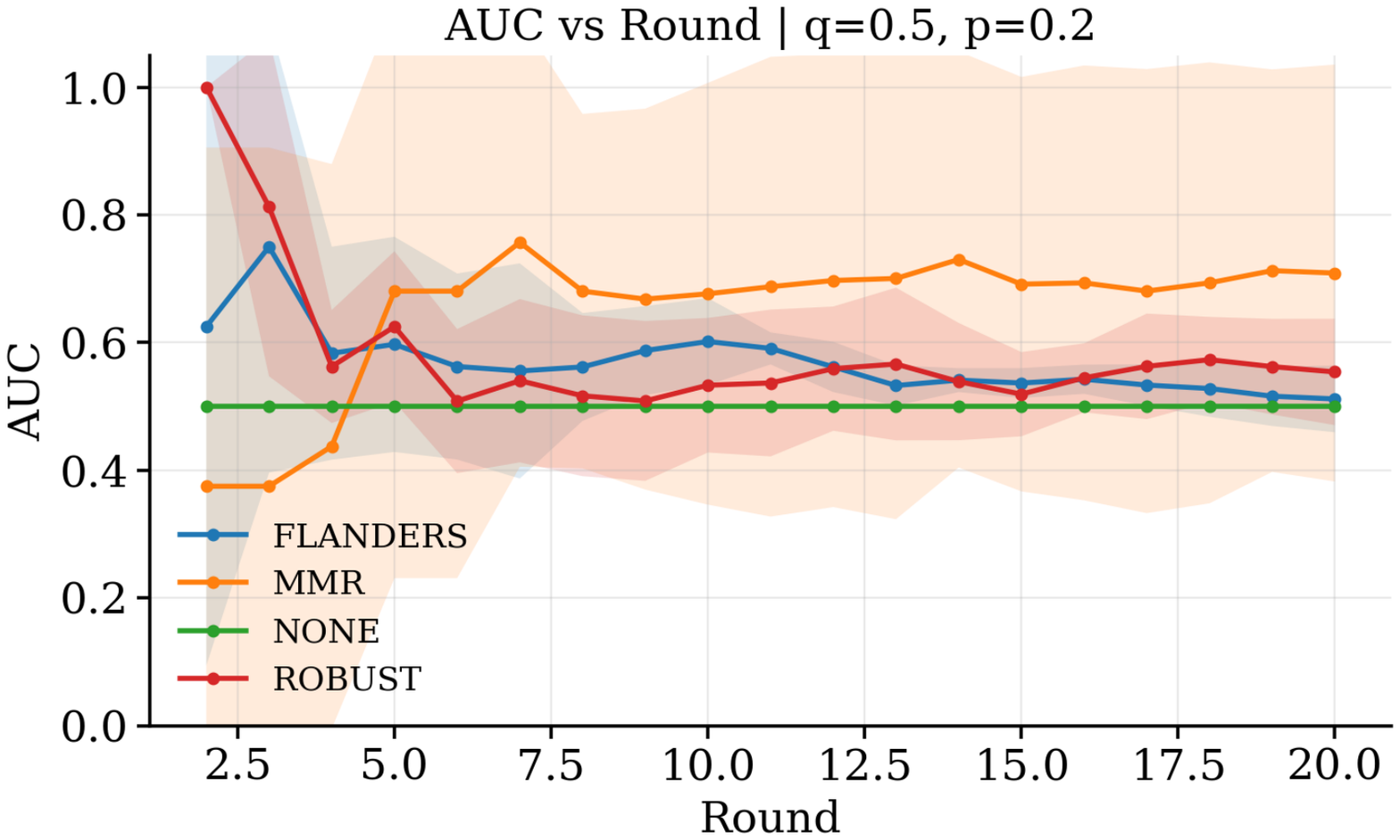}
        \caption{AUC at availability $q=0.5$.}
        \label{fig:auc_q0.5_backdoor}
    \end{subfigure}
    \hfill
    \begin{subfigure}[t]{0.32\textwidth}
        \centering
        \includegraphics[width=\linewidth]{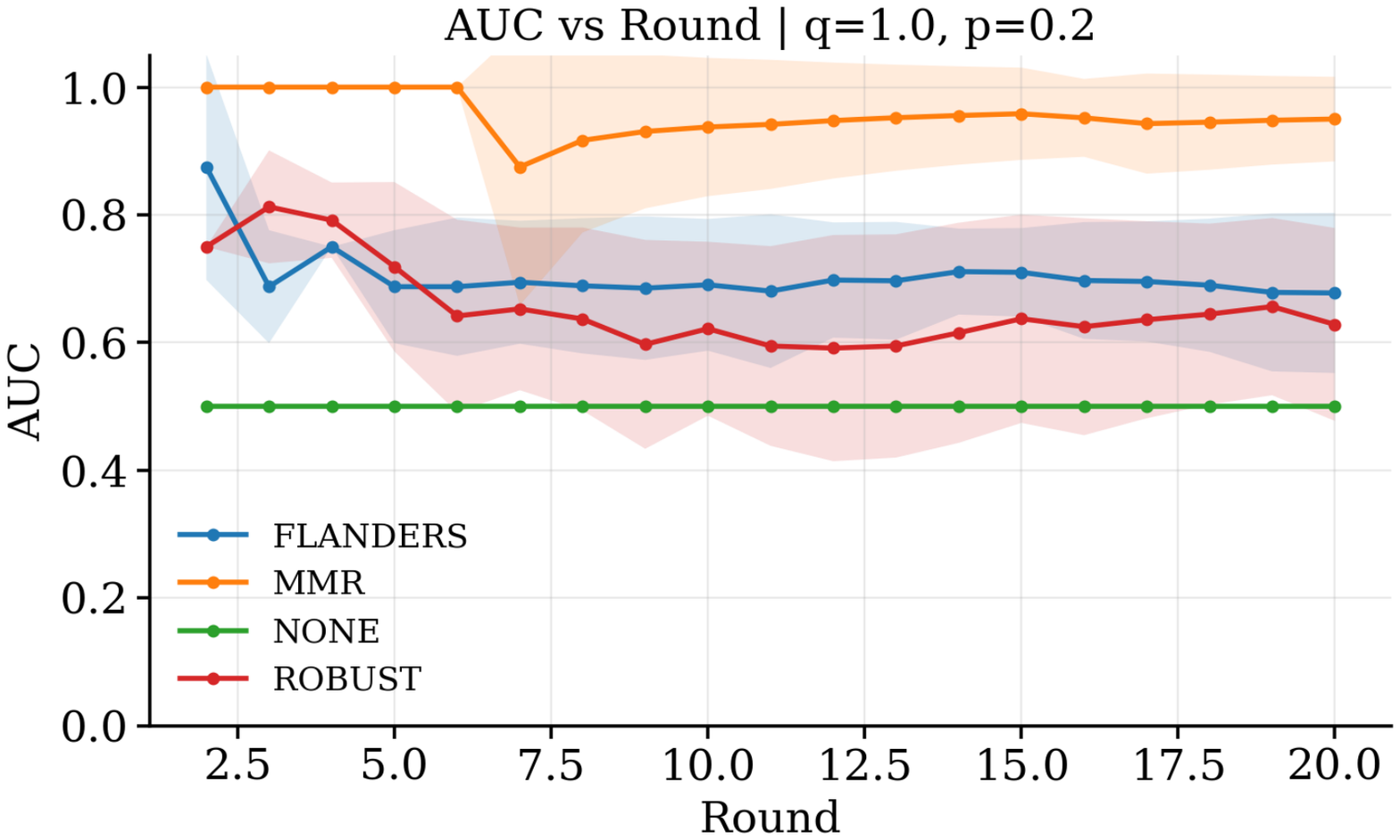}
        \caption{AUC at availability $q=1.0$.}
        \label{fig:auc_q1.0_backdoor}
    \end{subfigure}

    \vspace{0.8em}

    \begin{subfigure}[t]{0.32\textwidth}
        \centering
        \includegraphics[width=\linewidth]{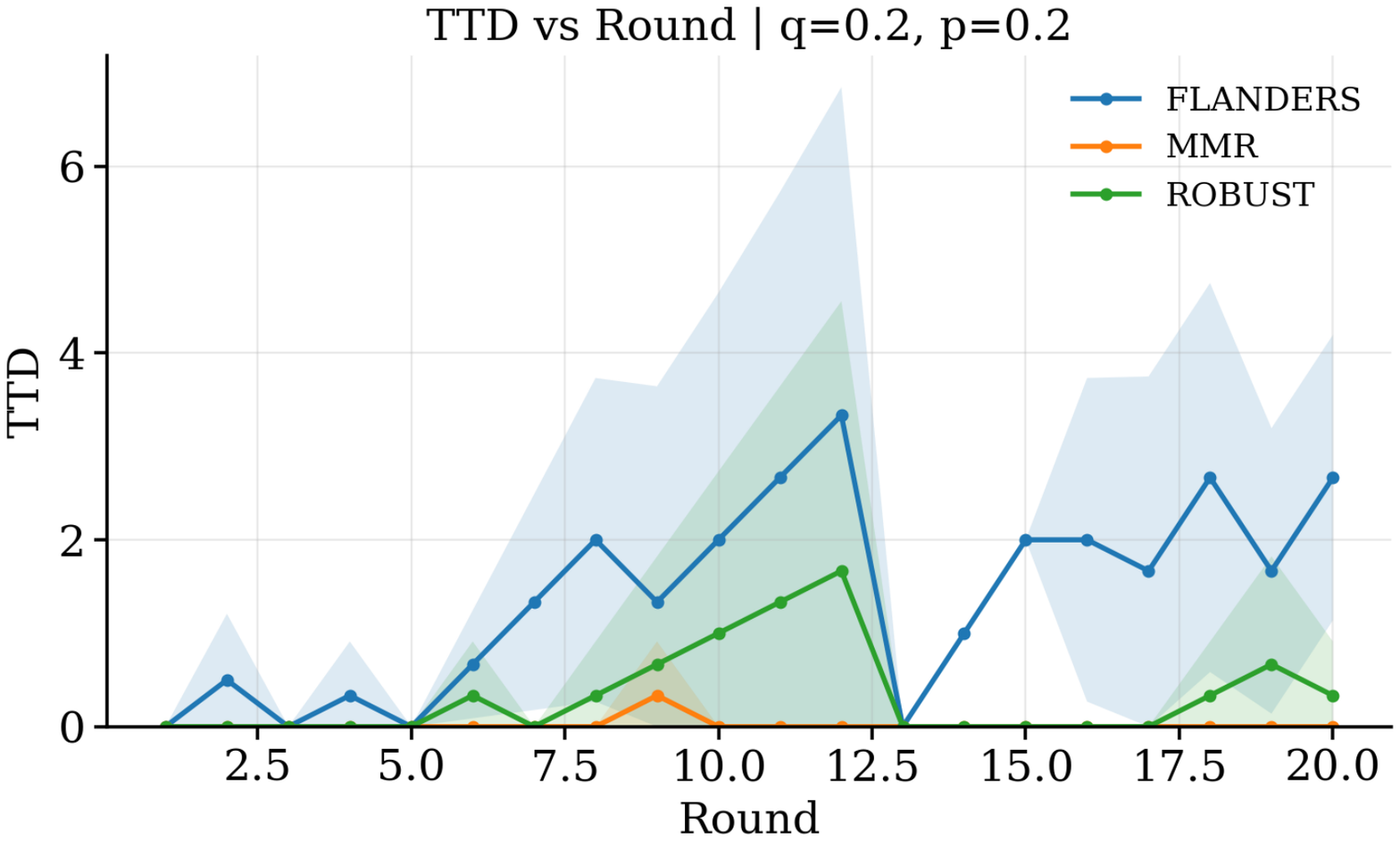}
        \caption{TTD at availability $q=0.2$.}
        \label{fig:ttd_q0.2_backdoor}
    \end{subfigure}
    \hfill
    \begin{subfigure}[t]{0.32\textwidth}
        \centering
        \includegraphics[width=\linewidth]{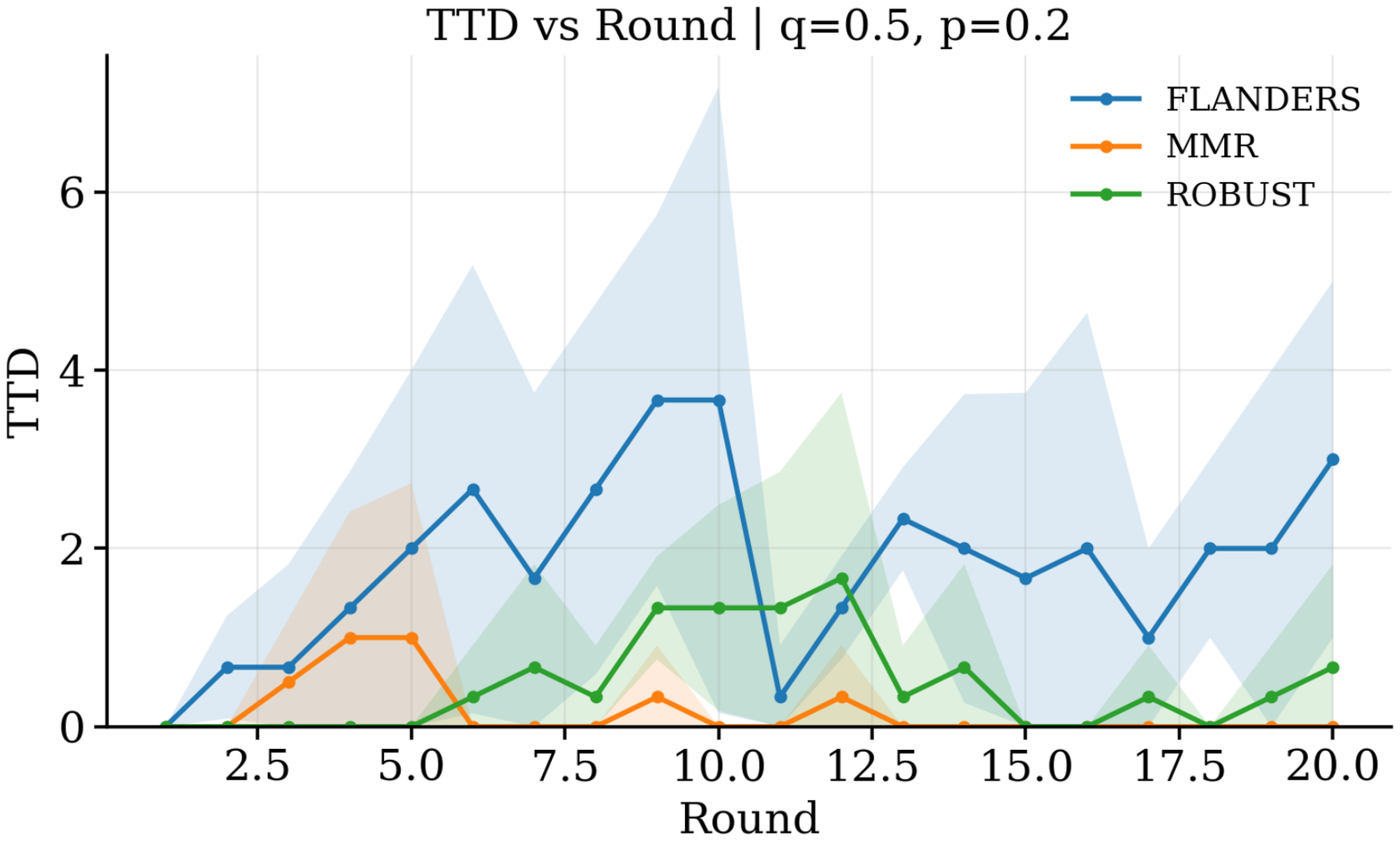}
        \caption{TTD at availability $q=0.5$.}
        \label{fig:ttd_q0.5_backdoor}
    \end{subfigure}
    \hfill
    \begin{subfigure}[t]{0.32\textwidth}
        \centering
        \includegraphics[width=\linewidth]{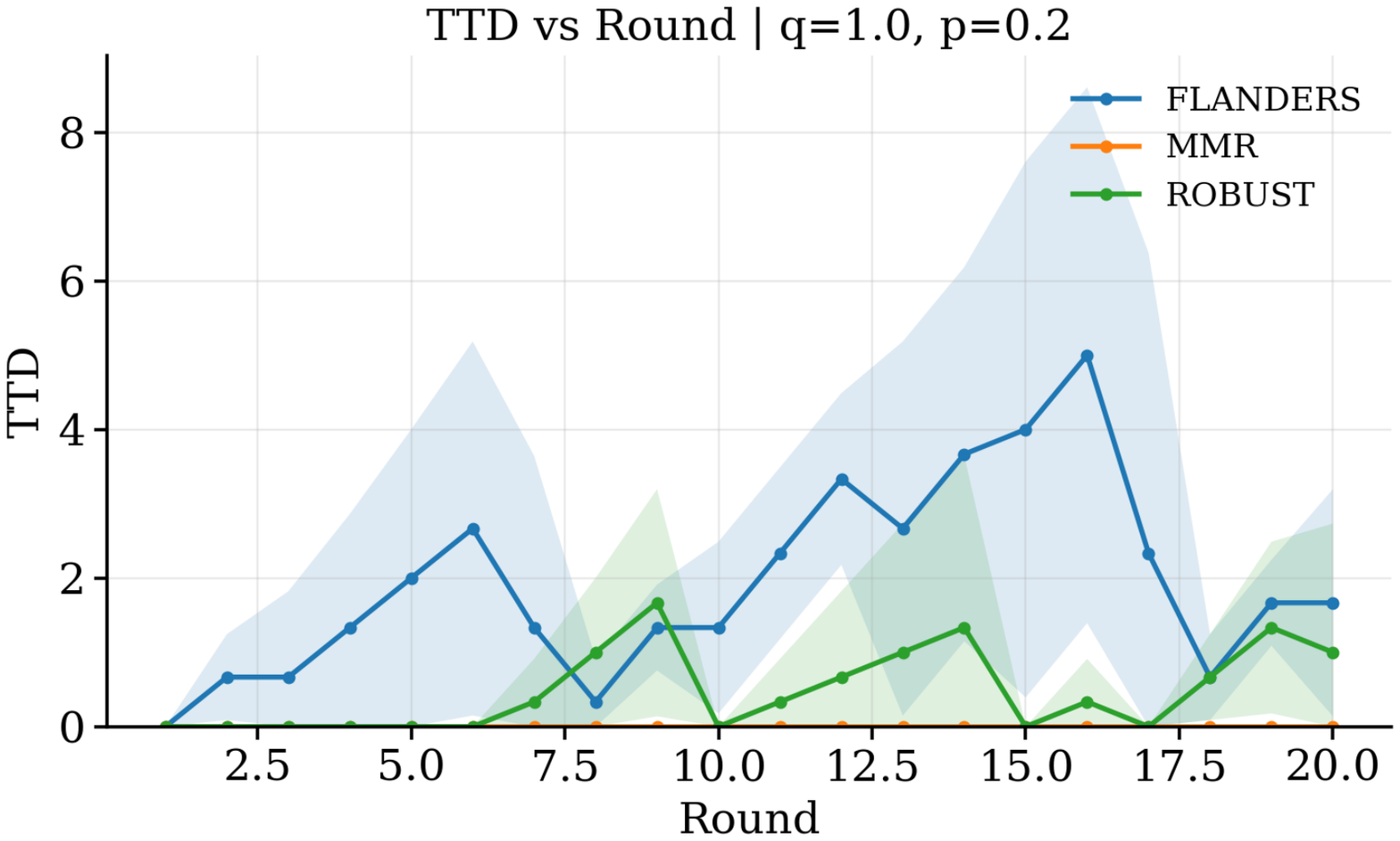}
        \caption{TTD at availability $q=1.0$.}
        \label{fig:ttd_q1.0_backdoor}
    \end{subfigure}

        \vspace{0.8em}

    \begin{subfigure}[t]{0.32\textwidth}
        \centering
        \includegraphics[width=\linewidth]{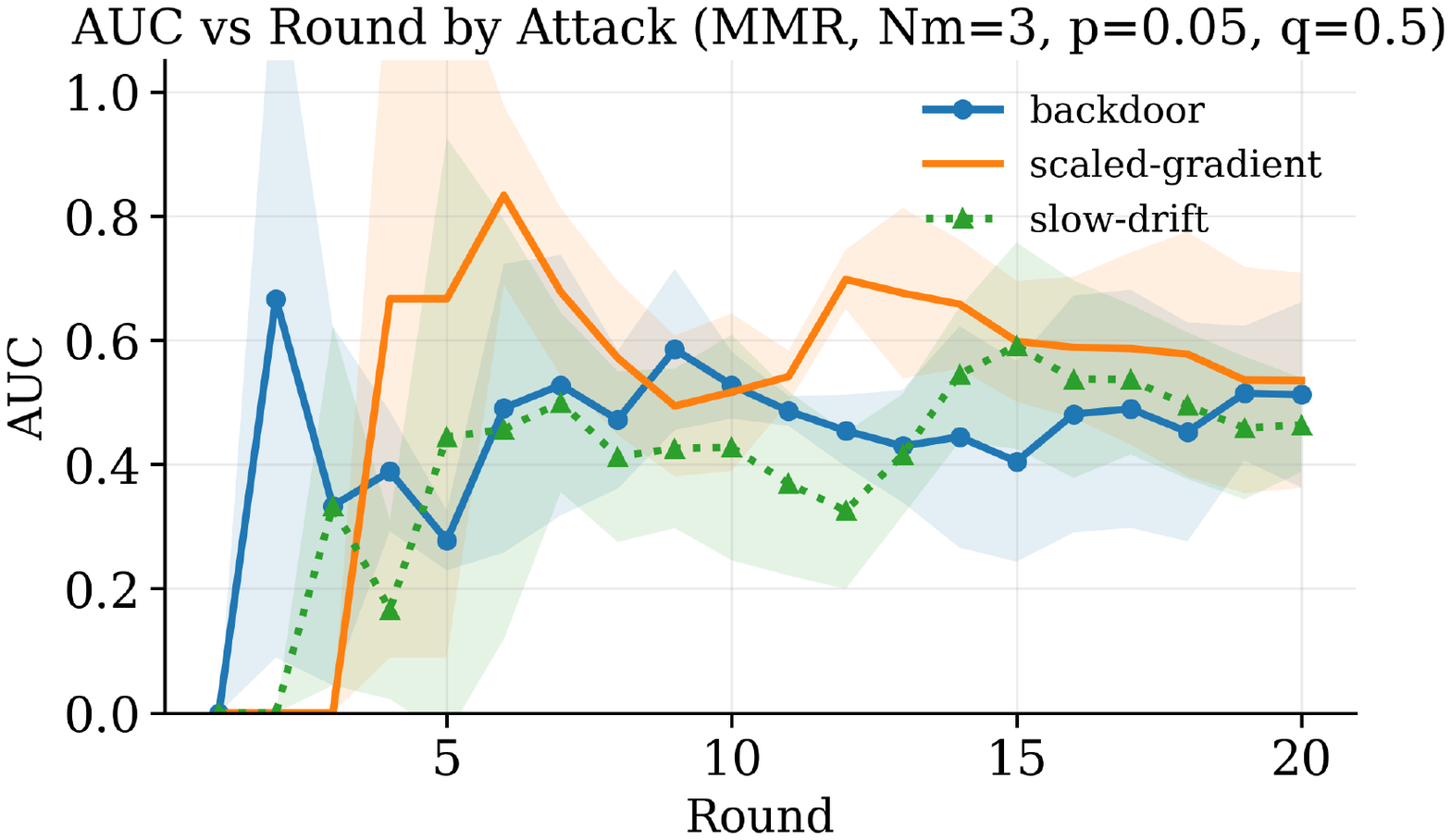}
        \vspace{-1.5em}
        \caption{AUC for attacks at $p=0.05$.}
        \label{fig:mmr_auc_attacks_p0_05}
    \end{subfigure}
        \hfill
    \begin{subfigure}[t]{0.32\textwidth}
        \centering
        \includegraphics[width=\linewidth]{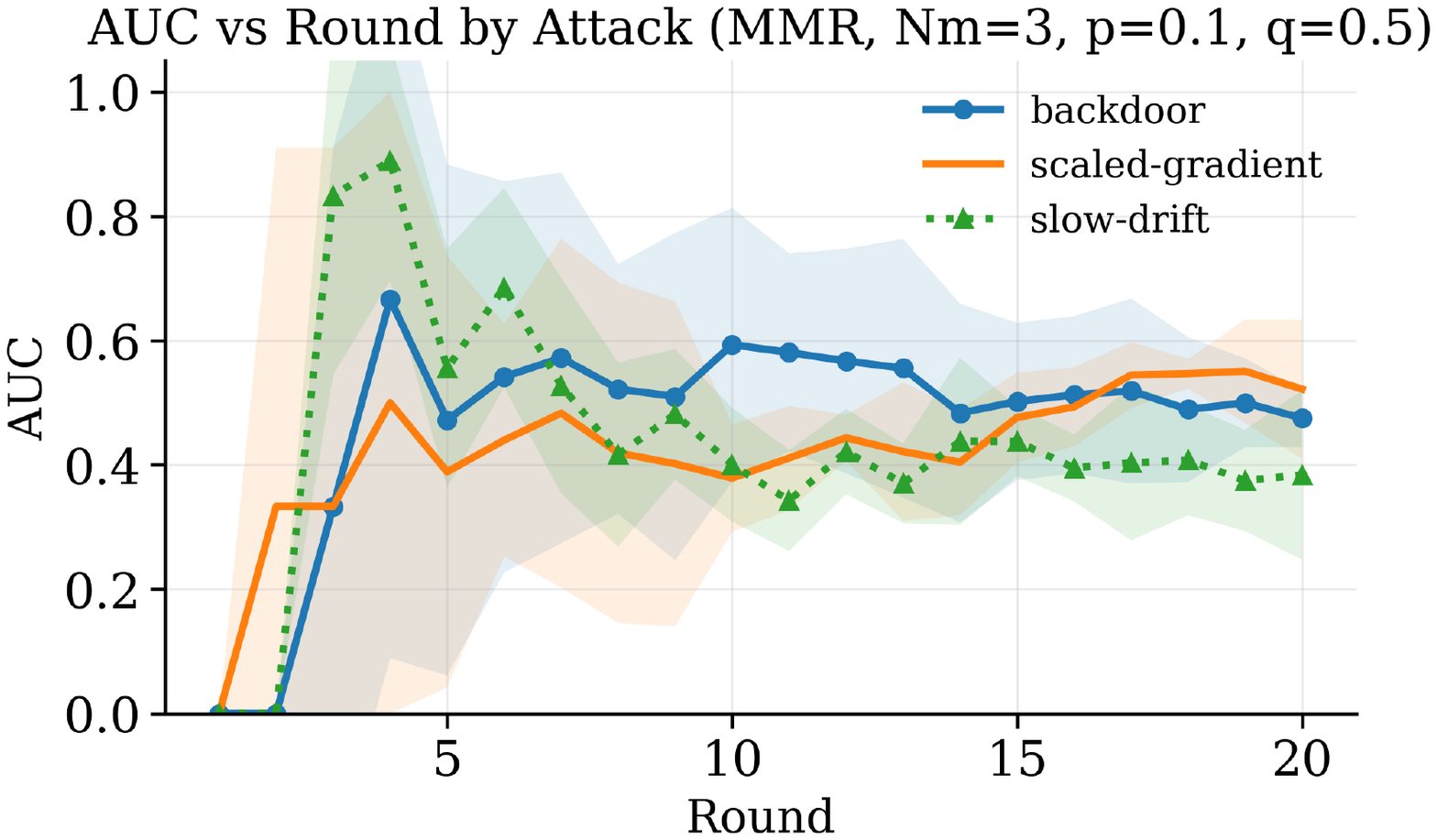}
        \vspace{-1.5em}
        \caption{AUC for attacks at $p=0.1$.}
        \label{fig:mmr_auc_attacks_p0_1}
    \end{subfigure}
        \hfill
    \begin{subfigure}[t]{0.32\textwidth}
        \centering
        \includegraphics[width=\linewidth]{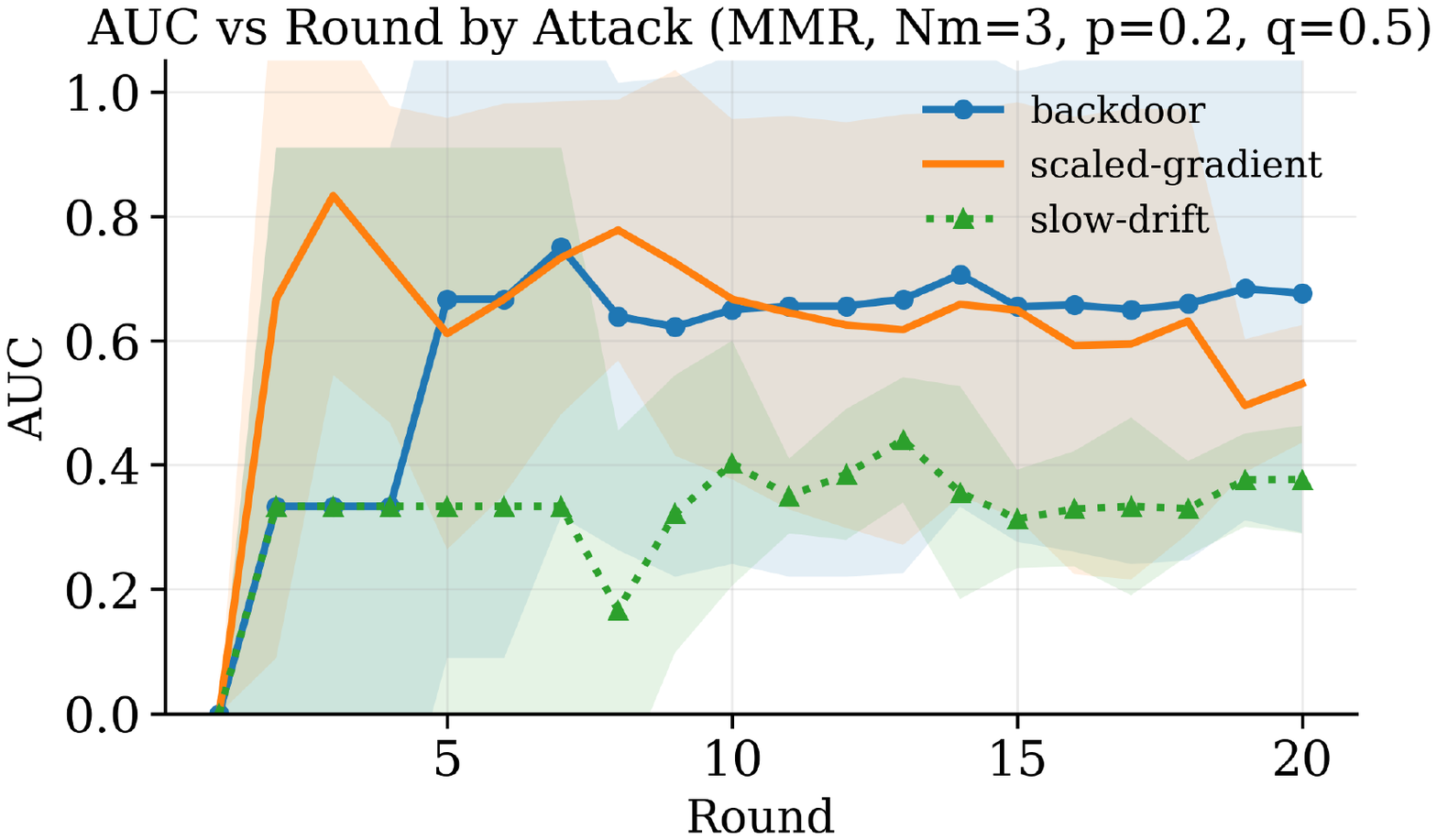}
        \vspace{-1.5em}
        \caption{AUC for attacks at $p=0.2$.}
        \label{fig:mmr_auc_attacks_p0_2}
    \end{subfigure}
    
    \caption{Comparison of MMR, Flanders, Robust, and no-defense baselines under the backdoor attack with setting ($N_m=3$, $q=0.2/0.5/1.0$). The results show complementary views of detection quality: round-wise AUC dynamics, AUC and TTD across client availability levels and different detectors, as well as AUC across different attack scenarios.}
    \label{fig:backdoor_comparison_all}
\end{figure*}

\subsubsection{System overhead and resource consumption.}

Figure~\ref{fig:system_costs_all} compares the runtime and memory costs of MMR, Flanders, Robust, and the no-defense baseline across different availability regimes.
The system overhead results (Figures~\ref{fig:overhead_comparison_q0.2},~\ref{fig:overhead_comparison_q0.5},~\ref{fig:overhead_comparison_q1.0}) show that all methods incur increasing cost over rounds, with clear but moderate differences between them. 
Flanders consistently exhibits the highest total round overhead. 
Robust and the no-defense baseline follow similar trends and remain within the same order of magnitude, indicating that the gap between methods is not extreme. 
In contrast, MMR maintains the lowest overhead across all values of $q$, demonstrating that cross-model comparison introduces only limited additional runtime cost.
The normalized detection cost (Figures~\ref{fig:detection_fraction_q0.2},~\ref{fig:detection_fraction_q0.5},~\ref{fig:detection_fraction_q1.0}) provides further insight into these differences. 
Flanders shows the largest detection-to-total-overhead ratio, indicating that a significant portion of its runtime is spent in the detection stage. 
However, this ratio remains bounded and does not dominate the entire round cost. 
MMR maintains a consistently small ratio across all availability regimes, while Robust remains moderate and the no-defense baseline stays near zero. 
These results indicate that MMR achieves detection with substantially better cost efficiency, while Flanders has higher overhead.
Memory usage (Figures~\ref{fig:memory_comparison_q0.2},~\ref{fig:memory_comparison_q0.5},~\ref{fig:memory_comparison_q1.0}) exhibits a consistent separation across methods. Memory is measured from each process’s /proc/self/status snapshot at every round, using VmRSS as active resident memory and VmSize as virtual address space. 
Flanders shows the highest resident set size (RSS), with memory consumption increasing gradually over rounds, especially at higher availability levels. 
MMR maintains a lower and more stable RSS footprint across all settings. 
Robust and the no-defense baseline lie between these extremes, with relatively flat memory profiles. 
Virtual memory size is comparable across methods, suggesting that the primary differences arise from active memory usage rather than reserved address space.
Overall, MMR remains the most efficient method.

\begin{figure*}[t]
    \centering

    \begin{subfigure}[t]{0.32\textwidth}
        \centering
        \includegraphics[width=\linewidth]{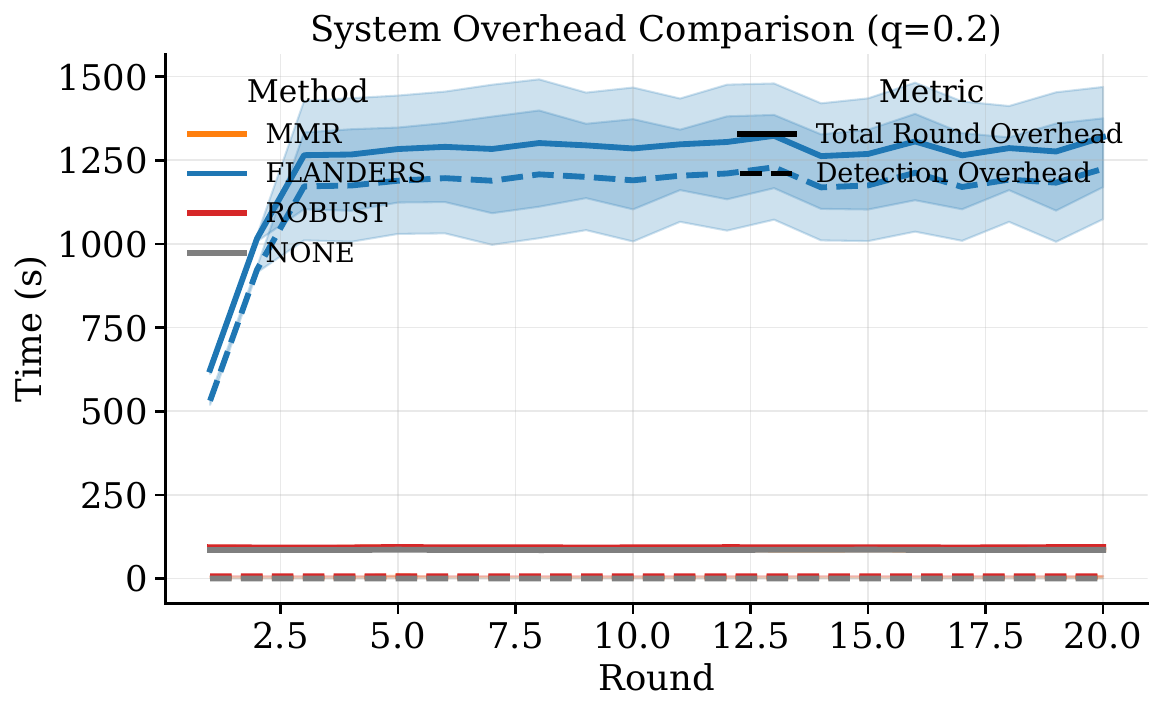}
        \caption{System overhead comparison at q=0.2.}
        \label{fig:overhead_comparison_q0.2}
    \end{subfigure}
    \hfill
    \begin{subfigure}[t]{0.32\textwidth}
        \centering
        \includegraphics[width=\linewidth]{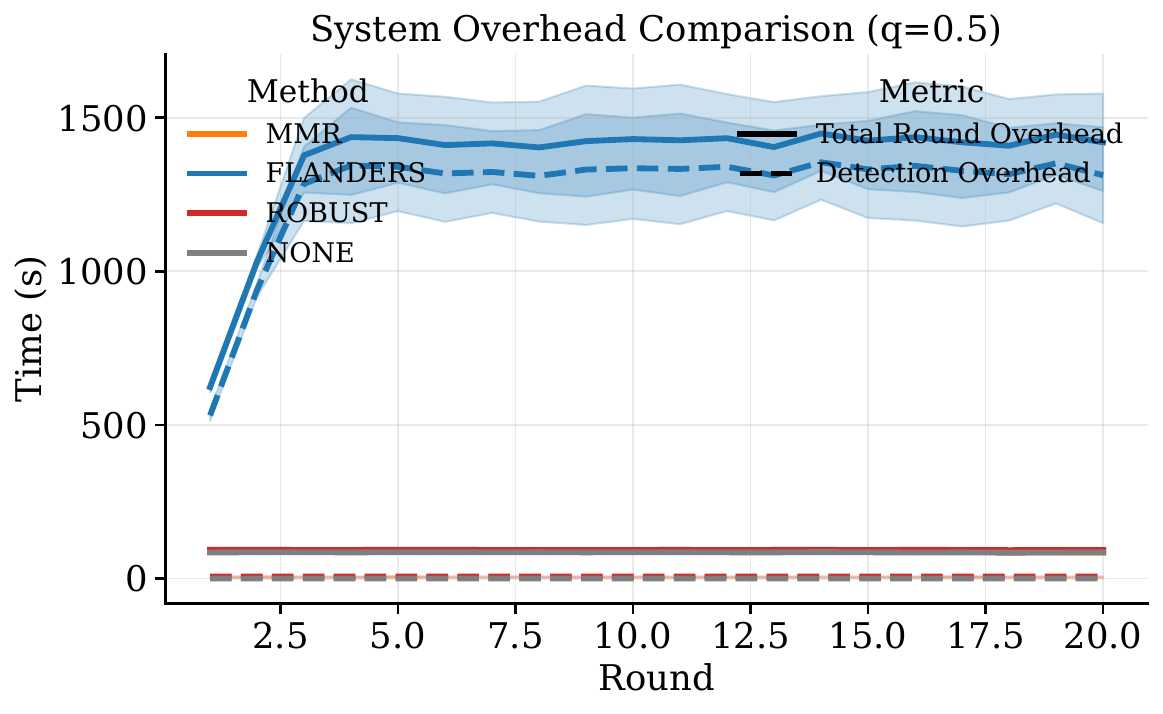}
        \caption{System overhead comparison at q=0.5.}
        \label{fig:overhead_comparison_q0.5}
    \end{subfigure}
    \hfill
    \begin{subfigure}[t]{0.32\textwidth}
        \centering
        \includegraphics[width=\linewidth]{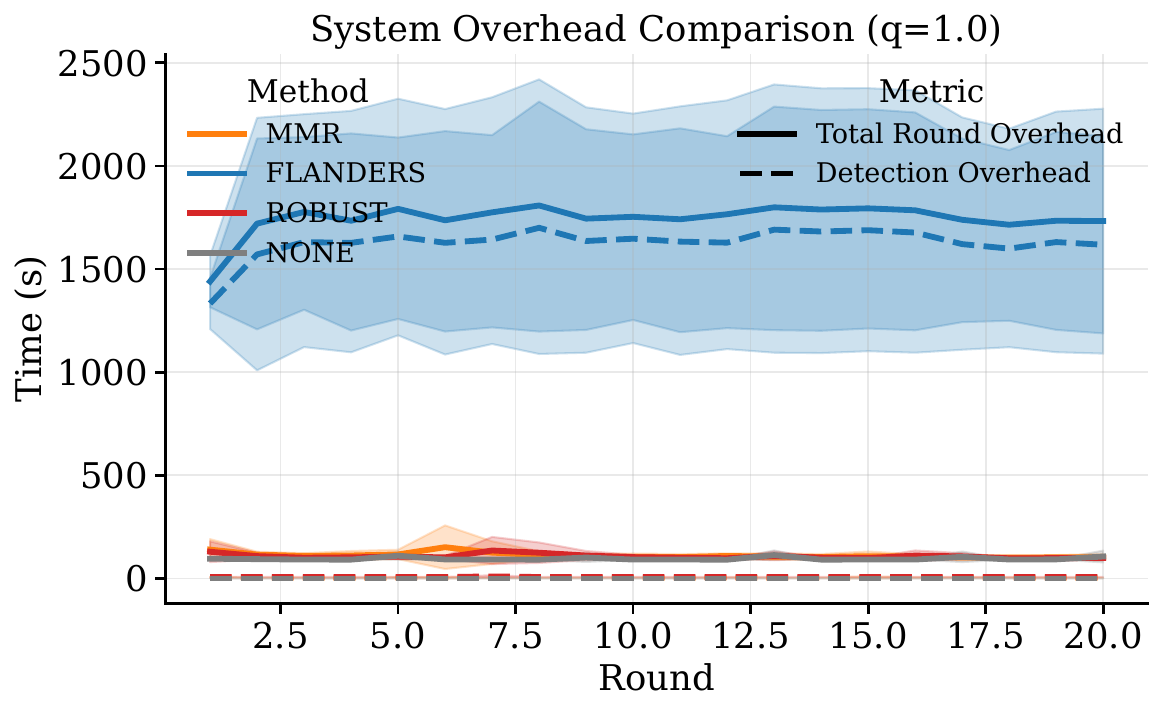}
        \caption{System overhead comparison at q=1.0.}
        \label{fig:overhead_comparison_q1.0}
    \end{subfigure}

       \vspace{0.8em}

       \begin{subfigure}[t]{0.32\textwidth}
        \centering
        \includegraphics[width=\linewidth]{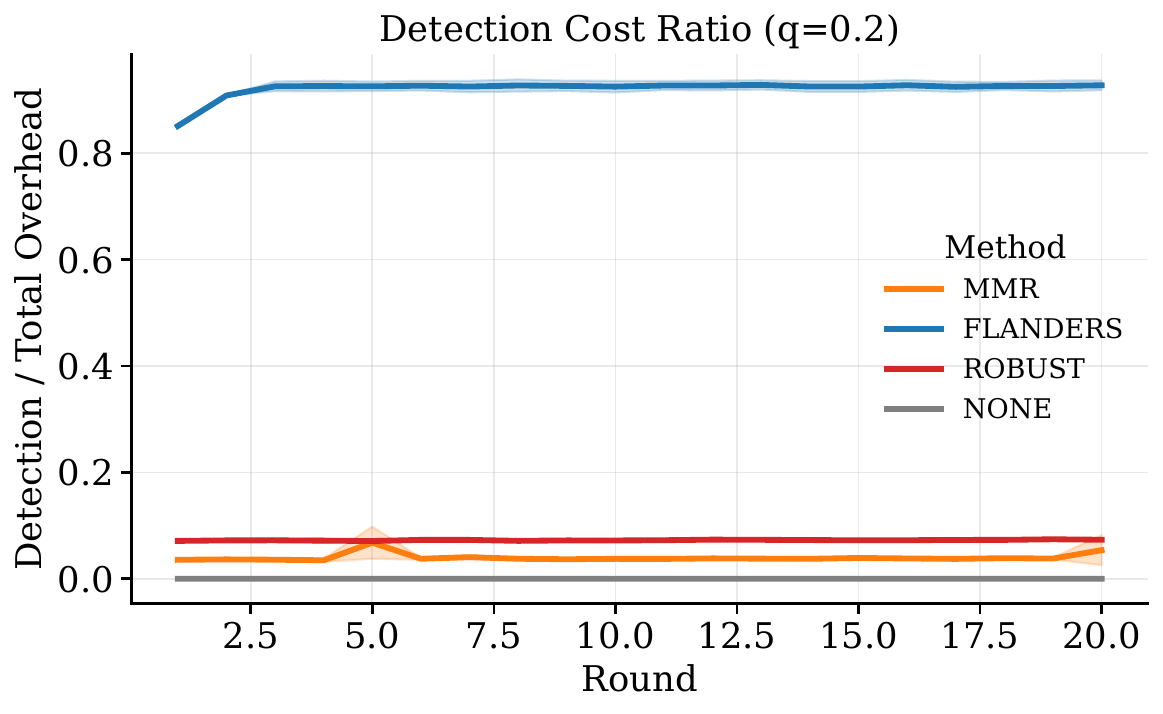}
        \caption{Detection overhead normalized by total round overhead at q=0.2.}
        \label{fig:detection_fraction_q0.2}
    \end{subfigure}
    \hfill
    \begin{subfigure}[t]{0.32\textwidth}
        \centering
        \includegraphics[width=\linewidth]{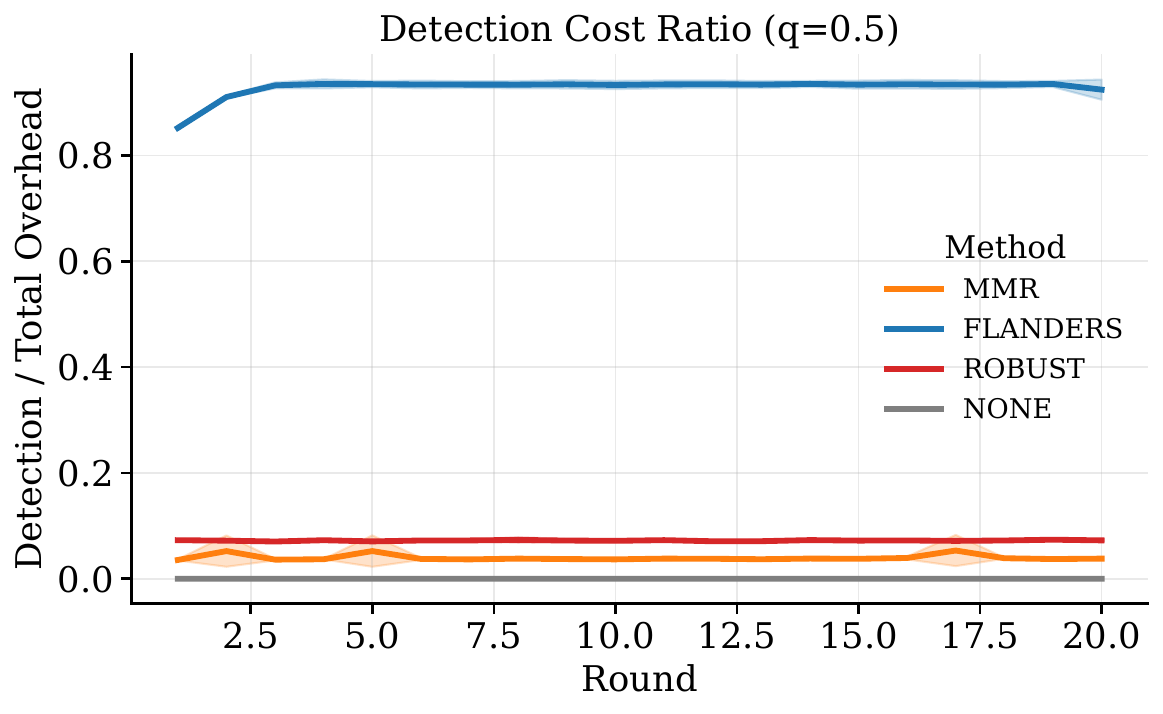}
        \caption{Detection overhead normalized by total round overhead at q=0.5.}
        \label{fig:detection_fraction_q0.5}
    \end{subfigure}
        \hfill
    \begin{subfigure}[t]{0.32\textwidth}
        \centering
        \includegraphics[width=\linewidth]{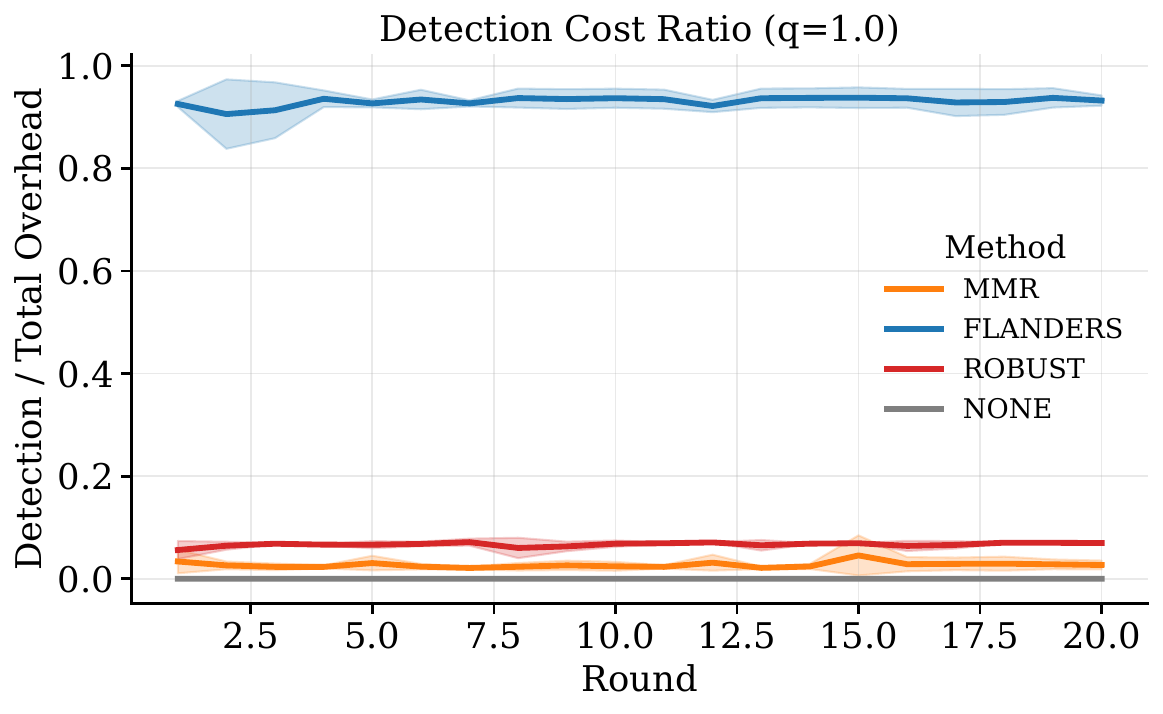}
        \caption{Detection overhead normalized by total round overhead at q=1.0.}
        \label{fig:detection_fraction_q1.0}
    \end{subfigure}

        \vspace{0.8em}

    \begin{subfigure}[t]{0.32\textwidth}
        \centering
        \includegraphics[width=\linewidth]{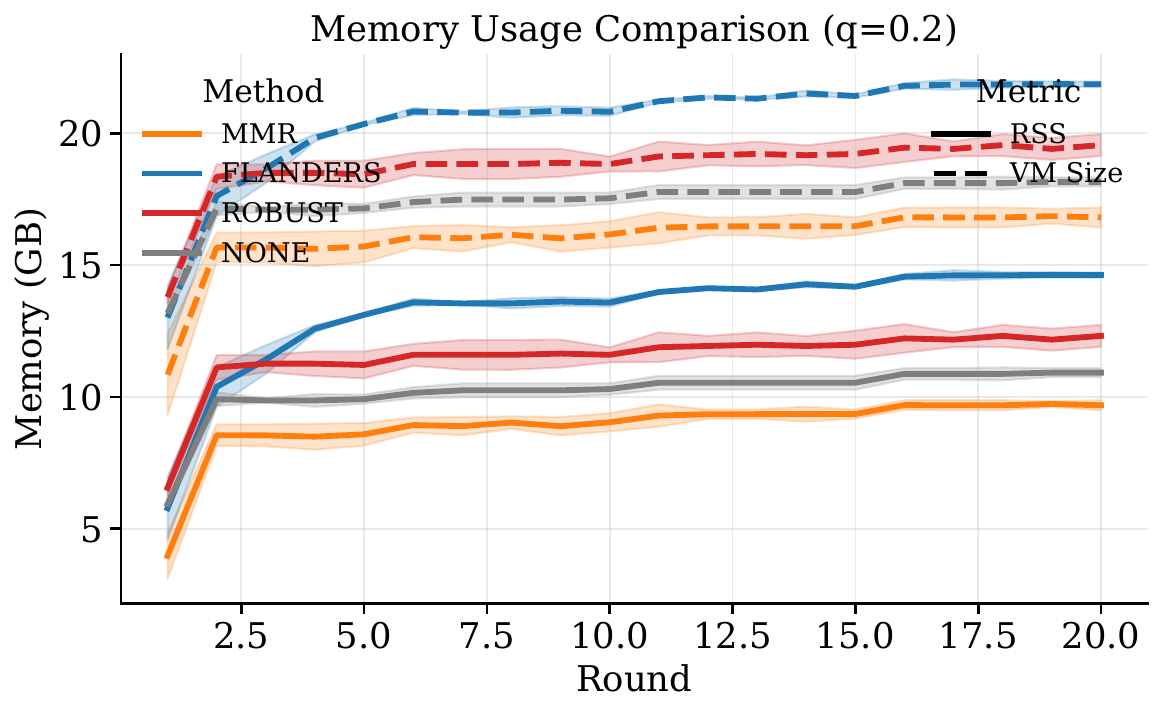}
        \caption{RSS and virtual memory size across methods at q=0.2.}
        \label{fig:memory_comparison_q0.2}
    \end{subfigure}
    \hfill
    \begin{subfigure}[t]{0.32\textwidth}
        \centering
        \includegraphics[width=\linewidth]{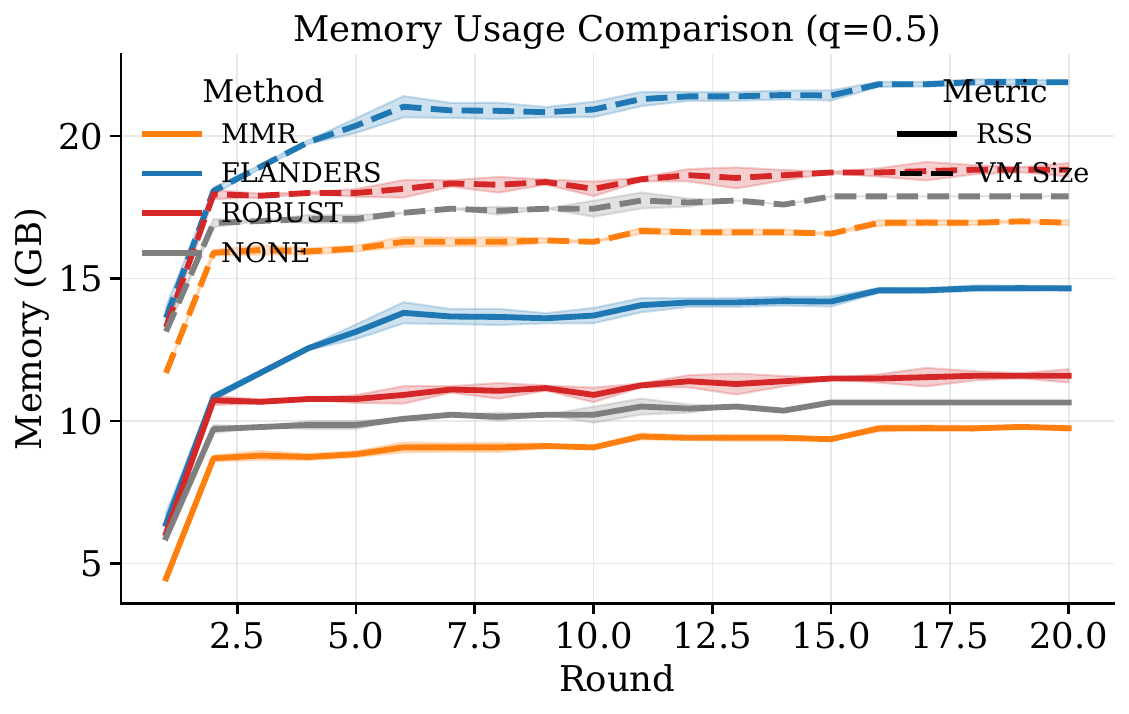}
        \caption{RSS and virtual memory size across methods at q=0.5.}
        \label{fig:memory_comparison_q0.5}
    \end{subfigure}
    \hfill
    \begin{subfigure}[t]{0.32\textwidth}
        \centering
        \includegraphics[width=\linewidth]{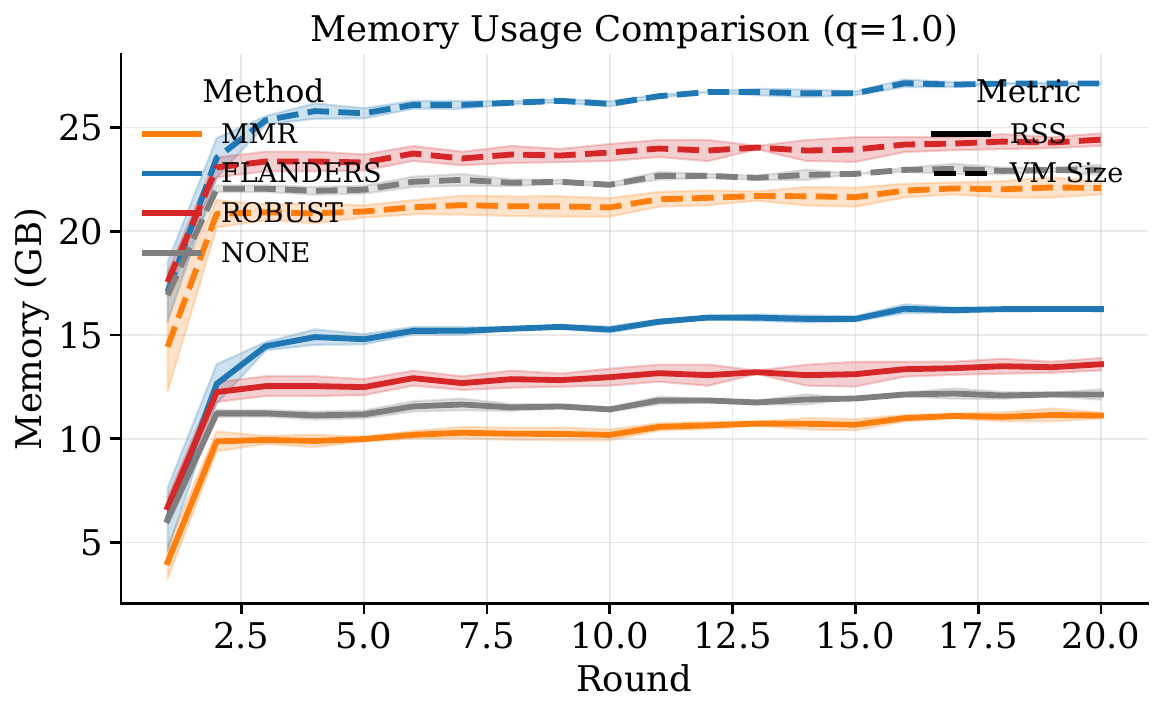}
        \caption{RSS and virtual memory size across methods at q=1.0.}
        \label{fig:memory_comparison_q1.0}
    \end{subfigure}

    \caption{System cost comparison of MMR, Flanders, Robust, and the no-defense baseline. The plots report absolute runtime overhead, normalized detection cost, overall memory footprint. Together, they show that MMR preserves low runtime and memory overhead relative to Flanders and Robust.}
    \label{fig:system_costs_all}
\end{figure*}

\section{Conclusion}
\label{s:conclusion}

This work introduces Multi-Model Rotation as a new defensive method for securing distributed fine-tuning of small language models under intermittent and heterogeneous client participation. 
In contrast to temporal anomaly detectors such as Flanders, which rely on maintaining per-client histories and assume relatively stable participation, MMR leverages multiple lightweight model replicas trained on independently sampled client subsets. 
This design transforms poisoning from a transient perturbation into a persistent cross-model inconsistency: malicious updates inevitably induce divergence in at least one model trajectory.
Our theoretical analysis shows that, under adversarial or intermittent participation, poisoning increases expected inter-model divergence, while limited availability reduces the statistical reliability of temporal aggregation-based detectors. 
Empirically, across availability regimes \(q \in \{0.2, 0.5, 1.0\}\), MMR consistently achieves the strongest detection performance, with higher AUC and faster detection compared to Flanders and Robust. 
In particular, MMR maintains stable detection signals across rounds, whereas Flanders and Robust exhibit weaker and less consistent performance over time, and Robust shows slower adaptation despite moderate stability.
Beyond technical contributions, this work improves security and trustworthiness of federated learning systems, supporting safer deployment of SLMs in sensitive domains like healthcare and IoT.
However, several limitations remain: while our evaluation covers representative attack models (backdoor, slow-drift, and scaled-gradient), it does not exhaustively analyze robustness against fully adaptive or attack-aware adversaries, which remains an important direction for future work. Also, while MMR avoids maintaining per-client histories, it introduces system-level overhead through the maintenance of multiple model replicas; although our experiments show this overhead to be modest, scalability with larger ensembles and highly dynamic participation patterns warrants further study.

\bibliography{references}
\bibliographystyle{apalike}

\section*{Data and Code Availability}

We provide anonymized access to the full implementation of our method:
\begin{center}
\url{https://anonymous.4open.science/r/MMR-BD78}
\end{center}

All experiments use simulated datasets generated within the codebase. The repository includes training and evaluation pipelines, and configuration files required for full reproducibility. All resources are anonymized for double-blind review.

\appendix

\section{Inter-model divergence under poisoning}

\begin{theorem}
\label{thm:divergence_increase}
Consider one synchronous federated aggregation round with a population of $N$ clients.  
Each client $k$ returns a (random) model update $u_k\in\mathbb{R}^d$. Assume a fraction $p\in[0,1]$ of clients are \emph{malicious} and the remaining $1-p$ are honest. Model the update $u$ of a \emph{random} client as a mixture:
\begin{equation}
u = \begin{cases}
u_h = g + \xi_h, & \text{with probability } 1-p \quad(\text{honest}),\\[2pt]
u_m = g + b + \xi_m, & \text{with probability } p \quad(\text{malicious})
\end{cases}
\end{equation}
where
\begin{itemize}
  \item $g\in\mathbb{R}^d$ is the common nominal mean update of honest clients,
  \item $b\in\mathbb{R}^d$ is a fixed \emph{additive bias} introduced by malicious clients (the ``poison''), 
  \item $\xi_h,\xi_m$ are zero-mean random vectors with covariance matrices $\Sigma_h,\Sigma_m$ respectively, independent across clients.
\end{itemize}
Let each model $M_i$ aggregate $n$ independent client updates uniformly sampled from the population (sampling with replacement, or equivalently, $n\ll N$). Define the aggregated per-model update
\begin{equation}
\Delta_i = \frac{1}{n}\sum_{k=1}^n u_{i,k}
\end{equation}
and let two independently sampled models $M_1$ and $M_2$ update their parameters $\theta$ by a step-size $\eta>0$:
\begin{equation}
\theta^1_{t+1}=\theta_t+\eta\Delta_1,\qquad
\theta^2_{t+1}=\theta_t+\eta\Delta_2
\end{equation}
Then the expected squared Euclidean distance (inter-model divergence) between the two models after the round satisfies
\begin{align}
\mathbb{E}\big[ \|\theta^1_{t+1}-\theta^2_{t+1}\|^2 \big]
&= 2\eta^2\cdot\frac{1}{n}\Big( p(1-p)\|b\|^2 + p\operatorname{tr}(\Sigma_m) + (1-p)\operatorname{tr}(\Sigma_h)\Big) \label{eq:thm-main}
\end{align}
Consequently, the net change in expected squared divergence due to poisoning (compared to the benign case $p=0$) is
\begin{equation}
\Delta_{\mathrm{poison}}
\;=\;
\frac{2\eta^2}{n}
\Big(
p(1-p)\|b\|^2
+
p\big(\mathrm{tr}(\Sigma_m)-\mathrm{tr}(\Sigma_h)\big)
\Big).
\end{equation}
Thus, poisoning increases expected divergence whenever
\begin{equation}
(1-p)\|b\|^2 + \mathrm{tr}(\Sigma_m)-\mathrm{tr}(\Sigma_h) > 0.
\end{equation}
\end{theorem}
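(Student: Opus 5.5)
The plan is to reduce the divergence to the variance of a single mixture draw. Since both models start from the same $\theta_t$, we have $\theta^1_{t+1}-\theta^2_{t+1}=\eta(\Delta_1-\Delta_2)$. The two aggregates are built from independent samples, so $\Delta_1$ and $\Delta_2$ are i.i.d. with common mean $\mu=\mathbb{E}[u]=g+pb$. Expanding $\|\Delta_1-\Delta_2\|^2$ around $\mu$, the cross term vanishes by independence and zero mean, which gives
\begin{equation*}
\mathbb{E}\|\theta^1_{t+1}-\theta^2_{t+1}\|^2=2\eta^2\,\mathbb{E}\|\Delta_1-\mu\|^2=2\eta^2\,\mathrm{tr}\,\mathrm{Cov}(\Delta_1).
\end{equation*}
Because $\Delta_1$ averages $n$ i.i.d. copies of $u$ (sampling with replacement, or approximately so when $n\ll N$), $\mathrm{tr}\,\mathrm{Cov}(\Delta_1)=\frac{1}{n}\mathrm{tr}\,\mathrm{Cov}(u)$. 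What remains is to compute the total variance of one mixture draw.

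For that computation, write $u=g+Zb+\xi$, where $Z\sim\mathrm{Bernoulli}(p)$ indicates a malicious client and $\xi$ is $\xi_m$ or $\xi_h$ according to $Z$. Apply the law of total variance conditioned on $Z$. The variance of the conditional mean $g+Zb$ is $p(1-p)bb^\top$, with trace $p(1-p)\|b\|^2$. The expected conditional covariance is $p\Sigma_m+(1-p)\Sigma_h$. Taking traces gives $\mathrm{tr}\,\mathrm{Cov}(u)=p(1-p)\|b\|^2+p\,\mathrm{tr}(\Sigma_m)+(1-p)\,\mathrm{tr}(\Sigma_h)$, and substituting back yields \eqref{eq:thm-main}.

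For the consequences, set $p=0$ to get the benign value $\frac{2\eta^2}{n}\mathrm{tr}(\Sigma_h)$. Subtracting it gives $\Delta_{\mathrm{poison}}=\frac{2\eta^2}{n}\big(p(1-p)\|b\|^2+p(\mathrm{tr}\Sigma_m-\mathrm{tr}\Sigma_h)\big)$. For $p>0$, factoring out $p$ shows that positivity is equivalent to $(1-p)\|b\|^2+\mathrm{tr}\Sigma_m-\mathrm{tr}\Sigma_h>0$.

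The computation is routine. The only delicate point is the sampling assumption. Under sampling without replacement, the draws within a model, and across the two models if they share the population, are weakly dependent. This introduces a finite-population correction factor of $(N-n)/(N-1)$ and small cross-model covariances. I would handle this by stating the result exactly for with-replacement sampling and noting that these corrections are $O(n/N)$, which is consistent with the theorem's "$n\ll N$" caveat.
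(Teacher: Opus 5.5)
Your proof is correct and follows essentially the same route as the paper. You reduce to $2\eta^2\operatorname{tr}\operatorname{Cov}(\Delta_1)$ using independence of the two aggregates, divide by $n$, compute the total variance of one mixture draw, and subtract the $p=0$ value; the only cosmetic difference is that you get $p(1-p)\|b\|^2+p\operatorname{tr}(\Sigma_m)+(1-p)\operatorname{tr}(\Sigma_h)$ from the law of total variance, whereas the paper expands $\|{-pb}+\xi_h\|^2$ and $\|(1-p)b+\xi_m\|^2$ around $g+pb$ directly. Restricting to $p>0$ before factoring out $p$ is a small improvement over the paper, since at $p=0$ the increase is zero whether or not the stated condition holds.
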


\textbf{Proof.}
We first compute the first and second moments of a single randomly drawn client update $u$.

\textbf{(1) Mean of a single update.} By the mixture model,
\begin{equation}
\mathbb{E}[u] = (1-p)\mathbb{E}[u_h] + p\mathbb{E}[u_m] = (1-p)g + p(g+b) = g + p b.
\end{equation}

\textbf{(2) Covariance / second moment of a single update.} Compute the (total) variance of $u$:
\begin{equation}
\operatorname{Var}(u) = \mathbb{E}\big[\|u - \mathbb{E}[u]\|^2\big].
\end{equation}
For the honest case, $u_h - \mathbb{E}[u] = g + \xi_h - (g+pb) = -p b + \xi_h$.

For the malicious case, $u_m - \mathbb{E}[u] = g + b + \xi_m - (g+pb) = (1-p)b + \xi_m$.
Thus
\begin{align*}
\operatorname{Var}(u)
&= (1-p)\,\mathbb{E}\big[\| -p b + \xi_h\|^2\big] \;+\; p\,\mathbb{E}\big[\| (1-p)b + \xi_m\|^2\big] \\
&= (1-p)\big( \| -p b\|^2 + \mathbb{E}\|\xi_h\|^2\big) \;+\; p\big( \|(1-p)b\|^2 + \mathbb{E}\|\xi_m\|^2\big)\\
&= (1-p) p^2\|b\|^2 + (1-p)\operatorname{tr}(\Sigma_h) \;+\; p(1-p)^2\|b\|^2 + p\operatorname{tr}(\Sigma_m) \\
&= p(1-p)\|b\|^2 \;+\; (1-p)\operatorname{tr}(\Sigma_h) + p\operatorname{tr}(\Sigma_m).
\end{align*}
We used $\mathbb{E}\|\xi_h\|^2=\operatorname{tr}(\Sigma_h)$ and similarly for $\xi_m$.

\textbf{(3) Variance of the aggregated update $\Delta_i$.} The $n$ updates in the sample used to compute $\Delta_i$ are i.i.d.\ draws from the population model (by sampling with replacement or by assuming $n\ll N$). Hence
\begin{equation}
\operatorname{Var}(\Delta_i) = \operatorname{Var}\Big(\frac{1}{n}\sum_{k=1}^n u_{i,k}\Big) = \frac{1}{n}\operatorname{Var}(u)
\end{equation}

\textbf{(4) Expected squared difference between two independent aggregates.}
Since the two models are sampled independently, $\Delta_1$ and $\Delta_2$ are independent and identically distributed with mean
$\mathbb{E}[\Delta]=\mathbb{E}[u]=g+pb$ and variance
$\operatorname{Var}(\Delta)=\frac{1}{n}\operatorname{Var}(u)$. For independent random vectors $X,Y$ with the same distribution,
\begin{align}
\mathbb{E}\|X-Y\|^2
&= \mathbb{E}\|X\|^2+\mathbb{E}\|Y\|^2-2\mathbb{E}\langle X,Y\rangle \\
&= 2\mathbb{E}\|X\|^2 - 2\langle \mathbb{E}[X],\mathbb{E}[Y]\rangle .
\end{align}
Taking $X=\Delta_1$ and $Y=\Delta_2$, and using independence,
\begin{equation}
\mathbb{E}\langle \Delta_1,\Delta_2\rangle
=
\langle \mathbb{E}[\Delta_1],\mathbb{E}[\Delta_2]\rangle
=
\|\mathbb{E}[\Delta]\|^2 .
\end{equation}
Moreover,
\begin{equation}
\mathbb{E}\|\Delta_1\|^2
=
\operatorname{Var}(\Delta)+\|\mathbb{E}[\Delta]\|^2 .
\end{equation}
Therefore,
\begin{equation}
\mathbb{E}\|\Delta_1-\Delta_2\|^2
=
2\operatorname{Var}(\Delta).
\end{equation}
Substituting $\operatorname{Var}(\Delta)=\frac{1}{n}\operatorname{Var}(u)$ and the expression for $\operatorname{Var}(u)$ derived above gives
\begin{equation}
\mathbb{E}\|\Delta_1-\Delta_2\|^2
=
\frac{2}{n}
\Big(
p(1-p)\|b\|^2
+
p\operatorname{tr}(\Sigma_m)
+
(1-p)\operatorname{tr}(\Sigma_h)
\Big).
\end{equation}

\textbf{(5) Translate to parameter divergence.}
The models update by $\eta\Delta_i$; hence
\begin{equation}
\mathbb{E}\big[\|\theta^1_{t+1}-\theta^2_{t+1}\|^2\big]
=
\eta^2\mathbb{E}\|\Delta_1-\Delta_2\|^2 ,
\end{equation}
which yields the expression \eqref{eq:thm-main}.

To compare the poisoned and benign cases, note that when $p=0$,
\begin{equation}
\mathbb{E}_{\mathrm{benign}}
\big[\|\theta^1_{t+1}-\theta^2_{t+1}\|^2\big]
=
\frac{2\eta^2}{n}\operatorname{tr}(\Sigma_h).
\end{equation}
Therefore, the net change in expected squared divergence under poisoning is
\begin{align}
\Delta_{\mathrm{poison}}
&=
\mathbb{E}_{\mathrm{poison}}
\big[\|\theta^1_{t+1}-\theta^2_{t+1}\|^2\big]
-
\mathbb{E}_{\mathrm{benign}}
\big[\|\theta^1_{t+1}-\theta^2_{t+1}\|^2\big] \\
&=
\frac{2\eta^2}{n}
\Big(
p(1-p)\|b\|^2
+
p\operatorname{tr}(\Sigma_m)
+
(1-p)\operatorname{tr}(\Sigma_h)
-
\operatorname{tr}(\Sigma_h)
\Big) \\
&=
\frac{2\eta^2}{n}
\Big(
p(1-p)\|b\|^2
+
p\big(\operatorname{tr}(\Sigma_m)-\operatorname{tr}(\Sigma_h)\big)
\Big).
\end{align}
Thus, poisoning increases expected inter-model divergence whenever
\begin{equation}
(1-p)\|b\|^2
+
\operatorname{tr}(\Sigma_m)-\operatorname{tr}(\Sigma_h)
> 0.
\end{equation}
In particular, if $\operatorname{tr}(\Sigma_m)\geq \operatorname{tr}(\Sigma_h)$, then
$\Delta_{\mathrm{poison}}>0$ for any $p\in(0,1)$ and $b\neq 0$.

\begin{enumerate}
  \item The theorem isolates the \emph{additive bias} model of poisoning. More general poisoning
  mechanisms, such as multiplicative scaling or targeted gradient manipulation with non-constant
  bias, can be analyzed similarly by replacing $b$ with the appropriate effective bias term and
  tracking the corresponding second-order moments.

  \item The net change in expected divergence has two components: a positive bias-induced term
  $p(1-p)\|b\|^2$ and a covariance-difference term
  $p(\operatorname{tr}(\Sigma_m)-\operatorname{tr}(\Sigma_h))$. Therefore, poisoning does not
  unconditionally increase divergence. Highly coordinated attackers may submit low-variance
  malicious updates with $\operatorname{tr}(\Sigma_m)<\operatorname{tr}(\Sigma_h)$, which can
  partially offset the bias-induced increase.

  \item Under comparable or larger malicious update variance, i.e.,
  $\operatorname{tr}(\Sigma_m)\geq \operatorname{tr}(\Sigma_h)$, the expected divergence increase
  scales as $\frac{1}{n}$ and grows with $p(1-p)\|b\|^2$, peaking in the bias term at
  $p=\tfrac{1}{2}$. Thus, all else equal, smaller per-model sample sizes and moderate malicious
  fractions make divergence-based detection easier in expectation.
\end{enumerate}

Under the same assumptions, if malicious bias $b$ persists across $T$ consecutive rounds and client sampling is independent across rounds, the expected accumulated squared divergence (sum over rounds) grows linearly in $T$, i.e., by $T\cdot\Delta_{\mathrm{poison}}$ in expectation (neglecting second-order effects from checkpoints/rollbacks).

\section{Failure of temporal detection under intermittent participation}

\begin{theorem}
\label{thm:intermittent_failure}
Consider a federated training population of $N$ clients. A fraction $p\in(0,1)$ of these clients are malicious and each malicious client injects an additive bias $b\in\mathbb{R}^d$ (as in Theorem~\ref{thm:divergence_increase}). Time is divided into detection windows of length $T$ rounds.  
Assume \emph{intermittent participation}: every client (honest or malicious) participates in each round independently with probability $q\in(0,1)$ (i.e., availability is i.i.d.\ across rounds and clients). A temporal / per-client detector attempts to flag a client by performing a (scalar) one-sample test on that client's observed updates within the window (projecting updates onto the direction $b$), requiring at least $m\ge 1$ observations of that client in the window to form a decision.
Then:

\begin{enumerate}
  \item For any fixed $m$, the probability that a given malicious client is observed fewer than $m$ times in the window is
  \begin{equation}
    \Pr\big[\#\text{obs} < m\big] \;=\; \sum_{k=0}^{m-1} \binom{T}{k} q^k(1-q)^{T-k}.
  \end{equation}
    In particular, by Markov's inequality,
    \begin{equation}
    \Pr[\#\text{obs} \ge m] \le \frac{qT}{m},
    \quad\text{so}\quad
    \Pr[\#\text{obs} < m] \ge 1 - \frac{qT}{m},
    \end{equation}
    which tends to $1$ when $qT\ll m$.

  \item Suppose the detector projects each observed update $u$ onto the attack direction $b$ (assume $\|b\|>0$) and obtains scalar observations $X_1,\dots,X_K$ for a given client (where $K$ is the realized number of observations in the window). Assume the scalar noise has variance $\sigma^2$ and the malicious mean shift in the projection is $\mu = \langle b, b\rangle/\|b\| = \|b\|$ (so malicious observation mean differs from honest mean by $\mu$ in this projection). Under a standard two-sided $z$-test at level $\alpha$, the \emph{required} number of observations to achieve power $1-\beta$ satisfies
\begin{equation}
    K \;\ge\; \frac{(z_{1-\alpha/2}+z_{1-\beta})^2 \sigma^2}{\mu^2}
\end{equation}
  where $z_{\gamma}$ is the $\gamma$-quantile of a standard normal. Consequently, if the expected number of observations $ \mathbb{E}[K]=qT$ is substantially smaller than this RHS, the per-client detector cannot reliably detect the malicious bias with the desired power and significance.
\end{enumerate}
Therefore, for intermittent participation with small $q$, either (i) many malicious clients are not observed sufficiently often (Item 1), or (ii) the detector needs an impractically large window $T$ to accumulate enough observations for statistical power (Item 2). In both cases, single-model / temporal detectors that rely on repeated per-client observations are likely to fail, motivating ensemble or multi-model designs that expose poisoning via cross-model divergence.
\end{theorem}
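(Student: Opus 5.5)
The plan is to treat the two items separately, since they rest on different elementary facts. Item 1 is a statement about the participation process alone and needs only the independence assumption. Item 2 is a standard sample-size calculation for a one-sample $z$-test.

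\emph{Item 1.} Fix a malicious client and write $K=\sum_{t=1}^{T}\mathbf{1}\{\text{client participates in round } t\}$. By the i.i.d.\ availability assumption the indicators are independent Bernoulli$(q)$, so $K\sim\mathrm{Binomial}(T,q)$. The formula for $\Pr[K<m]$ then follows by summing the binomial mass function over $k=0,\dots,m-1$. For the bound, note that $K\ge 0$ and $\mathbb{E}[K]=qT$, so Markov's inequality gives $\Pr[K\ge m]\le qT/m$. Taking complements yields $\Pr[K<m]\ge 1-qT/m$, which tends to $1$ as $qT/m\to 0$. One remark is needed: when $qT\ge m$ the bound is vacuous, and the statement only claims it is informative when $qT\ll m$.

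\emph{Item 2.} I will condition on the realized $K$, so that $X_1,\dots,X_K$ are i.i.d.\ with variance $\sigma^2$. Their mean is $\mu_0$ under the honest hypothesis and $\mu_0+\mu$ under the malicious one, where $\mu=\langle b,b\rangle/\|b\|=\|b\|$ comes from projecting $g+b+\xi_m$ versus $g+\xi_h$ onto $b/\|b\|$. The test statistic is $Z=\sqrt{K}(\bar X-\mu_0)/\sigma$, which is exactly $N(0,1)$ under Gaussian noise and approximately so by the CLT otherwise. The two-sided level-$\alpha$ test rejects when $|Z|>z_{1-\alpha/2}$. Under the alternative, $Z\sim N(\sqrt{K}\mu/\sigma,1)$.

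The power is therefore $\Phi(\sqrt{K}\mu/\sigma - z_{1-\alpha/2}) + \Phi(-\sqrt{K}\mu/\sigma - z_{1-\alpha/2})$. I will drop the second term, which is at most $\alpha/2$. Requiring the first term to be at least $1-\beta$ gives $\sqrt{K}\mu/\sigma\ge z_{1-\alpha/2}+z_{1-\beta}$. Squaring gives the stated threshold $K^\ast=(z_{1-\alpha/2}+z_{1-\beta})^2\sigma^2/\|b\|^2$.

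To connect this back to intermittency, I will apply Item 1 with $m=\lceil K^\ast\rceil$. If $qT\ll K^\ast$, then $\Pr[K<K^\ast]\ge 1-qT/K^\ast\approx 1$. So with high probability the realized sample is below the power requirement, and the detector's power on that client falls short of $1-\beta$. This yields the concluding dichotomy: either $T$ must be of order $K^\ast/q$, or detection fails.

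\emph{Main obstacle.} The hardest point is justifying the word ``required'' in Item 2, i.e.\ showing that $K\ge K^\ast$ is necessary and not just sufficient. The omitted second tail means the clean equivalence holds only approximately. A sharp necessity statement would also need optimality of the $z$-test, for instance via Neyman--Pearson for the one-sided Gaussian problem, where power is exactly $\Phi(\sqrt{K}\mu/\sigma-z_{1-\alpha})$. My first attempt will be to state the result for the $z$-test specifically, bounding its power above by $\Phi(\sqrt{K}\mu/\sigma-z_{1-\alpha/2})+\alpha/2$. I will present the threshold as the standard sample-size formula, up to the $\alpha/2$ slack and the Gaussian/CLT approximation.
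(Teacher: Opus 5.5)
Your proposal is correct and follows the paper's proof: Item 1 via $K\sim\mathrm{Binomial}(T,q)$ with Markov's inequality on the complement, and Item 2 via the standard $z$-test sample-size calculation with noncentrality $\sqrt{K}\mu/\sigma$. The only differences are that you go further than the paper in two places. The paper uses the upper-tail rejection rule $\sqrt{K}(\bar X-\mu_0)/\sigma>z_{1-\alpha/2}$, whose power under Gaussian noise is exactly $\Phi(\sqrt{K}\mu/\sigma-z_{1-\alpha/2})$, so the $\alpha/2$ lower-tail slack you flag never arises there. It also never links Items 1 and 2 explicitly, whereas you do so via $m=\lceil K^\ast\rceil$.
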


\begin{proof}
We prove Items 1 and 2 in turn.

\textbf{Item 1 (Observation scarcity).}  
By assumption each client participates in each round independently with probability $q$. The number of times a given client is observed in the window of length $T$ is therefore a $\mathrm{Binomial}(T,q)$ random variable. The exact expression for the probability of seeing fewer than $m$ observations is the CDF of this binomial:
\begin{equation}
\Pr[\#\text{obs} < m] \;=\; \sum_{k=0}^{m-1} \binom{T}{k} q^k(1-q)^{T-k}.
\end{equation}
This proves the stated identity.
To bound this probability in the sparse-participation regime, we consider the complementary event. Since $K\sim\mathrm{Binomial}(T,q)$ and $\mathbb{E}[K]=Tq$, Markov's inequality gives
\begin{equation}
\Pr[K \ge m] \le \frac{\mathbb{E}[K]}{m} = \frac{Tq}{m}.
\end{equation}
Therefore,
\begin{equation}
\Pr[K < m] \ge 1 - \frac{Tq}{m}.
\end{equation}
When $Tq \ll m$, this lower bound approaches $1$, proving that observation scarcity is severe when expected participation is small relative to the required number of observations.

\textbf{Item 2 (Statistical power scaling).}  
Consider the scalarized detector: project the vector-valued updates onto $b$ (or some informed direction aligned with the attack), yielding real-valued observations. For an honest client, the projected observations have mean $\mu_0$ (nominal); for malicious client they have mean $\mu_0+\mu$ where $\mu=\|b\|>0$ by normalization of the projection. Assume the noise in projections is zero-mean with variance $\sigma^2$ (homoskedastic; this is a standard simplifying assumption — heteroskedastic noise only alters constants).

A classical large-sample $z$-test for the mean difference rejects the null at level $\alpha$ when
\begin{equation}
\frac{\bar{X}-\mu_0}{\sigma/\sqrt{K}} > z_{1-\alpha/2}
\end{equation}
where $\bar{X}$ is the sample mean over $K$ observations. Under the alternative (malicious client) $\bar{X}$ is distributed around $\mu_0+\mu$, so the test statistic's noncentrality is $\mu\sqrt{K}/\sigma$. For the test to have power at least $1-\beta$, the noncentrality must exceed $z_{1-\alpha/2}+z_{1-\beta}$ (standard normal approximation for two-sided test). Thus we need
\begin{equation}
\frac{\mu\sqrt{K}}{\sigma} \;\ge\; z_{1-\alpha/2}+z_{1-\beta}
\end{equation}
which rearranges to
\begin{equation}
K \;\ge\; \frac{(z_{1-\alpha/2}+z_{1-\beta})^2 \sigma^2}{\mu^2}
\end{equation}
This proves the stated scaling.

Now observe the operational consequence: the expected number of observations available for a client in window $T$ is $\mathbb{E}[K]=qT$. If $qT$ is much smaller than the RHS above, the detector cannot reliably detect a malicious client within that window without increasing $T$ (to accumulate more $K$) or increasing participation $q$. For small $q$, reaching the required $K$ requires infeasibly large $T$, which in practice may conflict with responsiveness or model staleness constraints.

Combining Items 1 and 2 yields the claimed dichotomy: either many malicious clients are seldom seen (Item 1), or the detector lacks power unless it waits for a very long window (Item 2). Thus, single-model temporal/per-client detectors fail under intermittent participation.
\end{proof}

\begin{enumerate}
  \item The theorem abstracts the key practical difficulty: \emph{availability-induced sparsity} of observations. Many temporal detectors (e.g., history-based anomaly scores, Flanders-like methods) implicitly assume reasonably frequent client participation so that per-client statistics are estimable; when participation is intermittent, their assumptions break.
  \item Attackers can exploit intermittency strategically: if malicious clients participate only rarely or coordinate to concentrate their updates into short bursts, they can further reduce detectability under per-client tests.  
\end{enumerate}

\section{Response to attacks and further ablation study}

To evaluate the behavior of the proposed MMR framework under sparse participation and weak adversarial presence, we designed a controlled experiment emulating intermittent client availability and low attack intensity. 
The setup consists of $N=100$ simulated clients with participation probability $q=0.2$ per round, such that only approximately one-fifth of the population contributes updates during a given round. 
A small attacker fraction $p=0.05$ was introduced, corresponding to five compromised clients submitting perturbed updates. 
Each client performs a local training epoch using a lightweight transformer-based text model on assigned prompt-based data, ensuring reproducible compute and communication conditions.

Three attack modalities were evaluated to test the detector under different perturbation dynamics:
(1) a \emph{Slow-Drift} attack introducing a persistent low-magnitude bias across rounds;
(2) a \emph{Scaled-Gradient} attack that amplifies local updates by a fixed scaling factor;
and
(3) a \emph{Backdoor} attack that alters a subset of prompt-response patterns to induce targeted generation behavior.
Each attack type was evaluated independently across three random seeds to assess variability.

The ensemble size was varied as $N_m\in\{1,2,3,5\}$ to examine the relationship between model multiplicity and anomaly observability. 
The $N_m=1$ configuration serves as a single-model baseline equivalent to standard federated training without redundancy. 
For multi-model settings, ensembles shared an overlap ratio $\rho=0.2$, allowing partial client reuse across models. 
Throughout training, the server continuously logged probe variance and pairwise divergence statistics together with detector activation and quarantine events. 
After each run, the framework aggregated divergence metadata and anomaly statistics for post-analysis.

\begin{table*}[t]
\centering
\footnotesize
\caption{Detection outcomes for MMR under low participation ($q=0.2$) and sparse attack ratio ($p=0.05$). 
Each configuration was averaged across three random seeds. Reported metrics include mean observed pairwise divergence ($D_{i,j}$) and probe variance statistics extracted from the server logs.}
\begin{tabular}{|c|c|p{2cm}|p{6cm}|}
\hline
\textbf{Attack Type} & \textbf{$N_m$} & \textbf{Mean Pairwise Divergence $D_{i,j}$} & \textbf{Probe Variance $\sigma^2_{\text{probe}}$ / Remarks} \\ \hline

Slow-Drift & 1 & – & No variance spikes; detector inactive \\ \hline
Scaled-Gradient & 1 & – & No variance spikes; detector inactive \\ \hline
Backdoor & 1 & – & No variance spikes; detector inactive \\ \hline

Slow-Drift & 2 & 0.60 (round 1 spike) & Probe variance transient detected \\ \hline
Scaled-Gradient & 2 & 5.90 -- 20.7 (round 2 spikes) & Probe variance $\approx$ 0.17 -- 0.79; intermittent spike alerts \\ \hline
Backdoor & 2 & 3.35 -- 4.84 & Probe variance $\approx$ 1.00; repeated spikes but no confirmed isolation \\ \hline

Slow-Drift & 3 & 99.47 / 82.52 & Two models quarantined; strong divergence event in one seed \\ \hline
Scaled-Gradient & 3 & 0.038 (weak) & Probe variance $\approx$ 0.012; below adaptive threshold \\ \hline
Backdoor & 3 & 0.35 -- 0.59 & Probe variance $\approx$ 0.13 -- 1.00; minor spikes without isolation \\ \hline

Slow-Drift & 5 & 2.94 -- 4.26 & Probe variance $\approx$ 0.00040 -- 0.00062; divergence events in two seeds \\ \hline
Scaled-Gradient & 5 & 3.82 (seed 1 only) & Probe variance spikes $\approx$ 0.0181; no anomalies in remaining seeds \\ \hline
Backdoor & 5 & 16.53 -- 16.85 (seed 1) & Probe variance $\approx$ 0.074 (seed 2); no divergence in seeds 2--3 \\ \hline

\end{tabular}
\label{tab:mmr_divergence}
\end{table*}

\paragraph{Quantitative analysis of divergence and variance signals.}
Table~\ref{tab:mmr_divergence} summarizes the anomaly indicators extracted from server-side logs under sparse participation ($q=0.2$) and low adversarial intensity ($p=0.05$). 
For the single-model baseline ($N_m=1$), all attack types produced stable and statistically unremarkable traces. 
No pairwise divergence statistics exist in this configuration and probe variance remained near baseline, confirming that single-model training lacks comparative structure for disagreement-based anomaly detection.
With $N_m=2$, divergence-related events began to emerge. 
For Scaled-Gradient attacks, divergence values in the range $5.90$--$20.7$ appeared at round~2 together with intermittent probe-variance spike alerts. 
Backdoor attacks produced weaker divergence signals ($3.35$--$4.84$) accompanied by repeated variance spikes, although no stable isolation event was triggered. 
These results indicate that even minimal model redundancy can expose inconsistencies between independently evolving model trajectories that remain invisible in conventional single-model training.

\paragraph{Divergence amplification under larger ensembles.}
At $N_m=3$, anomaly behavior became more heterogeneous across attack types. 
Slow-Drift attacks generated substantially stronger divergence signals ($99.47/82.52$), resulting in two model quarantine events in one seed. 
In contrast, Scaled-Gradient and Backdoor attacks remained comparatively weak, producing only small divergence magnitudes and variance fluctuations below adaptive thresholds. 
This suggests that increasing model multiplicity can amplify persistent perturbation patterns while remaining less sensitive to transient or weak attacks under sparse participation.
For $N_m=5$, divergence observability remained attack dependent. 
Slow-Drift attacks continued to produce moderate divergence events across multiple seeds, while Backdoor attacks generated larger divergence spikes only in isolated runs. 
Scaled-Gradient perturbations produced occasional probe-variance spikes but did not consistently trigger isolation decisions. 
These results indicate that larger ensembles improve anomaly visibility for certain structured perturbations, although intermittent participation still limits reliable detection for weaker attacks.

Overall, the results support the theoretical expectation that cross-model anomaly observability emerges only when multiple concurrent models evolve on partially overlapping client subsets. 
When $N_m=1$, all updates appear statistically benign because no comparative structure exists for disagreement detection. 
As multiplicity increases, pairwise divergence and probe variance begin exposing inconsistencies between model trajectories, enabling the framework to surface anomalous behaviors that remain hidden in standard single-model federated training.

\end{document}